\definecolor{linkC}{HTML}{710580}
\newcommand{\niklas}[1]{{\footnotesize{\color{olive} [Niklas: #1]}}}
\newcommand{\traceassign}{\Pi}
\newcommand{\tracevars}{\mathcal{V}}
\newcommand{\sovars}{\mathfrak{V}}
\newcommand{\systemvar}{\mathfrak{S}}
\newcommand{\allvar}{\mathfrak{A}}
\renewcommand{\models}{\vDash}
\newcommand{\ap}{\text{AP}}
\newcommand{\U}{\LTLuntil}
\newcommand{\X}{\LTLnext}
\newcommand{\G}{\LTLglobally}
\newcommand{\F}{\LTLeventually}
\newcommand{\W}{\LTLweakuntil}
\newcommand{\largestSet}{\curlywedge}
\newcommand{\smallestSet}{\curlyvee}
\newcommand{\superimpose}[2]{{%
		\ooalign{%
			\hfil$\m@th#1\@firstoftwo#2$\hfil\cr
			\hfil$\m@th#1\@secondoftwo#2$\hfil\cr
		}%
}}
\newcommand{\fptype}{\mathbin{\mathpalette\superimpose{{\curlywedge}{\curlyvee}}}}
\newcommand{\quant}{\mathbb{Q}}
\newcommand{\know}{\mathbb{K}}
\newcommand{\nat}{\mathbb{N}}
\newcommand{\traces}{\mathit{Traces}}
\newcommand{\set}[1]{\{#1\}}
\newcommand{\true}[0]{\mathit{true}}
\newcommand{\false}[0]{\mathit{false}}
\newcommand{\ldot}{\mathpunct{.}}
\newcommand{\aut}[1]{\ensuremath{\mathcal{#1}}}
\newcommand{\cmark}{\ding{51}}%
\newcommand{\xmark}{\ding{55}}%
\newcommand{\calT}{\mathcal{T}}
\newcommand{\calA}{\mathcal{A}}
\newcommand{\calB}{\mathcal{B}}
\newcommand{\calC}{\mathcal{C}}
\newcommand{\calD}{\mathcal{D}}
\newcommand{\calL}{\mathcal{L}}
\newcommand{\calK}{\mathsf{K}}
\newcommand{\sfC}{\mathsf{C}}
\newcommand{\modin}{\triangleright}
\newcommand{\sohyperltl}[0]{\text{\normalfont{Hyper\textsuperscript{2}LTL}}}%
\newcommand{\fphyperltl}[0]{\text{\normalfont{Hyper\textsuperscript{2}LTL\textsubscript{fp}}}} 
\newcommand{\oldsohyperltl}[0]{$\text{Hyper}^2\text{LTL}$}
\newcommand{\oldfphyperltl}[0]{$\text{Hyper}^2\text{LTL}_{fp}$}
\newcommand{\hyperltl}[0]{\text{\normalfont{HyperLTL}}}
\newcommand{\hyperqptl}[0]{\text{\normalfont{HyperQPTL}}}
\newcommand{\HyperQPTL}[0]{\text{\normalfont{HyperQPTL}}}
\DeclareFontFamily{OMX}{MnSymbolE}{}
\DeclareSymbolFont{MnLargeSymbols}{OMX}{MnSymbolE}{m}{n}
\DeclareFontShape{OMX}{MnSymbolE}{m}{n}{
    <-6>  MnSymbolE5
   <6-7>  MnSymbolE6
   <7-8>  MnSymbolE7
   <8-9>  MnSymbolE8
   <9-10> MnSymbolE9
  <10-12> MnSymbolE10
  <12->   MnSymbolE12
}{}
\DeclareFontShape{OMX}{MnSymbolE}{b}{n}{
    <-6>  MnSymbolE-Bold5
   <6-7>  MnSymbolE-Bold6
   <7-8>  MnSymbolE-Bold7
   <8-9>  MnSymbolE-Bold8
   <9-10> MnSymbolE-Bold9
  <10-12> MnSymbolE-Bold10
  <12->   MnSymbolE-Bold12
}{}
\let\llangle\@undefined
\let\rrangle\@undefined
\DeclareMathDelimiter{\llangle}{\mathopen}%
                     {MnLargeSymbols}{'164}{MnLargeSymbols}{'164}
\DeclareMathDelimiter{\rrangle}{\mathclose}%
                     {MnLargeSymbols}{'171}{MnLargeSymbols}{'171}
\definecolor{dkcyan}{rgb}{0.1, 0.3, 0.3}
\definecolor{dkgreen}{rgb}{0,0.3,0}
\definecolor{olive}{rgb}{0.5, 0.5, 0.0}
\definecolor{dkblue}{rgb}{0,0.1,0.5}
\definecolor{col:ln}{rgb}  {0.1, 0.1, 0.7}
\definecolor{col:str}{rgb} {0.8, 0.0, 0.0}
\definecolor{col:db}{rgb}  {0.9, 0.5, 0.0}
\definecolor{col:ours}{rgb}{0.0, 0.7, 0.0}
\definecolor{lightgreen}{RGB}{170, 255, 220}
\definecolor{darkbrown}{RGB}{121,37,0}
\colorlet{listing-comment}{gray}
\colorlet{operator-color}{darkbrown}
\colorlet{comment-color}{black!50}
\lstdefinelanguage{custom-lang}{
	keywords={let, in, where, match, with, when, if, then, else, for, repeat, return, to, do, from},
	keywordstyle=[1]\color{dkblue},
	morekeywords=[2]{verify, systemToNBA, LTLtoNBA, eProduct, uProduct},
	keywordstyle=[2]\color{dkgreen},
    morekeywords=[3]{underApprox, overApprox},
	keywordstyle=[3]\color{darkbrown},
	comment=[l][\color{comment-color}]{//},
	literate=%
	{=}{{{\color{operator-color}=}}}1
	{|}{{{\color{dkblue}|}}}1
	{:}{{{\color{dkblue}:}}}1
	{:=}{{{\color{dkblue}:=}}}1
    {@}{ }1
}
\lstdefinestyle{default}{
	escapeinside={(*}{*)},
	basicstyle=\ttfamily\fontsize{9.3}{10.3}\selectfont,
	columns=fullflexible,
	commentstyle=\sffamily\color{black!50!white},
	framexleftmargin=1em,
	framexrightmargin=1ex,
	keepspaces=true,
	keywordstyle=\color{dkblue},
	mathescape,
	numbers=left,
	numberblanklines=false,
	numbersep=1.25em,
	numberstyle=\relscale{0.8}\color{gray}\ttfamily,
	showstringspaces=true,
	stepnumber=1,
	xleftmargin=2em,
}
\lstdefinelanguage{example-lang}{
	keywords={while,do},
	keywordstyle=[1]\bfseries,
	comment=[l][\color{comment-color}]{//},
	literate=%
	{<-}{{{\color{dkblue}$\leftarrow$}}}1
	{@}{ }1
}
\lstdefinestyle{example-style}{
	escapeinside={(*}{*)},
	basicstyle=\ttfamily\fontsize{8.4}{9.7}\selectfont,
	columns=fullflexible,
	commentstyle=\sffamily\color{black!50!white},
	framexleftmargin=0em,
	framexrightmargin=0ex,
	keepspaces=true,
	keywordstyle=\color{dkblue},
	mathescape,
	numbers=left,
	numberblanklines=false,
	showstringspaces=true,
	stepnumber=1,
	xleftmargin=0em,
	numbers=none
}
\begin{document}

\title{Second-Order Hyperproperties}

\author{Raven Beutner\orcidlink{0000-0001-6234-5651}  \and Bernd Finkbeiner\orcidlink{0000-0002-4280-8441} \and Hadar Frenkel\orcidlink{0000-0002-3566-0338}  \and Niklas Metzger\orcidlink{0000-0003-3184-6335}}
\institute{CISPA Helmholtz Center for Information Security,\\ Saarbr\"ucken, Germany\\
\email{\{raven.beutner,finkbeiner,hadar.frenkel,\\niklas.metzger\} @cispa.de}}
\authorrunning{R. Beutner, B. Finkbeiner, H. Frenkel, N. Metzger}
\maketitle              

\begin{abstract}
We introduce \sohyperltl, a temporal logic for the specification of hyperproperties that allows for 
second-order quantification over sets of traces. Unlike first-order temporal logics for hyperproperties, such as HyperLTL,
\sohyperltl\ can express complex epistemic properties like common knowledge, Mazurkiewicz trace theory, and asynchronous hyperproperties. The model checking problem of \sohyperltl\ is, in general, undecidable. For the expressive fragment
where second-order quantification is restricted to smallest and largest sets, we present an approximate model-checking algorithm that computes increasingly precise under- and overapproximations of the quantified sets, based on fixpoint iteration and automata learning. We report on encouraging experimental results with our model-checking algorithm, which we implemented in the tool~\texttt{HySO}. 

\end{abstract}

\section{Introduction}\label{sec:intro}

About a decade ago, Clarkson and Schneider coined the term \emph{hyperproperties}~\cite{DBLP:journals/jcs/ClarksonS10} for the rich class of system requirements that relate multiple computations. In their definition, hyperproperties generalize trace properties, which are sets of traces, to \emph{sets of} sets of traces. This covers a wide range of requirements, from information-flow security policies to epistemic properties describing the knowledge of agents in a distributed system. Missing from Clarkson and Schneider's original theory was, however, a concrete specification language that could express customized hyperproperties for specific applications and serve as the common semantic foundation for different verification methods.

A first milestone towards such a language was the introduction of the temporal logic HyperLTL~\cite{ClarksonFKMRS14}. HyperLTL extends linear-time temporal logic (LTL) with quantification over traces. Suppose, for example, that an agent $i$ in a distributed system observes only a subset of the system variables. The agent \emph{knows} that some LTL formula $\varphi$ is true on some trace $\pi$ iff $\varphi$ holds on \emph{all} traces $\pi'$ that agent $i$ cannot distinguish from $\pi$. If we denote the indistinguishability of $\pi$ and $\pi'$ by $\pi \sim_i \pi'$, then the property that \emph{there exists a trace $\pi$ where agent~$i$ knows $\varphi$} can be expressed as the HyperLTL formula
\begin{align*}
    \exists \pi. \forall \pi'\ldot \pi \sim_i \pi'  \rightarrow \varphi(\pi'),
\end{align*}
where we write $\varphi(\pi')$ to denote that the trace property $\varphi$ holds on trace $\pi'$.

While HyperLTL and its variations have found many applications~\cite{FinkbeinerRS15,10.1145/3127041.3127058,DimitrovaFT20}, the expressiveness of these logics is limited, leaving many widely used hyperproperties out of reach. A prominent example is \emph{common knowledge}, which is used in distributed applications to ensure simultaneous action~\cite{DBLP:books/mit/FHMV1995,HM}. Common knowledge in a group of agents means that the agents not only know \emph{individually} that some condition $\varphi$ is true, but that this knowledge is ``common" to the group in the sense that each agent \emph{knows} that every agent \emph{knows} that $\varphi$ is true; on top of that, each agent in the group \emph{knows} that every agent \emph{knows} that every agent \emph{knows} that $\varphi$ is true; and so on, forming an infinite chain of knowledge. 

The fundamental limitation of HyperLTL that makes it impossible to express properties like common knowledge is that the logic is restricted to 
\emph{first-order quantification}. HyperLTL, then, cannot reason about sets of traces directly, but must always do so by referring to individual traces that are chosen existentially or universally from the full set of traces. For the specification of an agent's individual knowledge, where we are only interested in the (non-)existence of a single trace that is indistinguishable and that violates $\varphi$, this is sufficient; however, expressing an infinite chain, as needed for common knowledge, is impossible.

In this paper, we introduce \sohyperltl{}, a temporal logic for hyperproperties with \emph{second-order quantification} over traces. In \sohyperltl{}, the existence of a trace $\pi$ where the condition $\varphi$ is common knowledge can be expressed as the following formula (using slightly simplified syntax):
\begin{align*}
    \exists \pi.\, \exists X \ldot\ \pi \in X \land \Big( \forall \pi' \in X.\, \forall \pi''.\ \big(\bigvee_{i=1}^n \pi' \sim_i \pi'' \big) \rightarrow \pi'' \in X \Big)\, \land\, \forall \pi' \in X\ldot \varphi({\pi'}).
\end{align*}
The second-order quantifier $\exists X$ postulates the existence of a set $X$ of traces that (1) contains $\pi$; that (2) is closed under the observations of each agent, i.e., for every trace $\pi'$ already in $X$, 
all other traces $\pi''$ that some agent~$i$ cannot distinguish from $\pi'$ are also in $X$; and that (3) only contains traces that satisfy $\varphi$.
The existence of $X$ is a necessary and sufficient condition for $\varphi$ being common knowledge on $\pi$. In the paper, we show that \sohyperltl{} is an elegant specification language for many hyperproperties of interest that cannot be expressed in HyperLTL, including, in addition to epistemic properties like common knowledge, also Mazurkiewicz trace theory and asynchronous hyperproperties.

The model checking problem for \sohyperltl{} is much more difficult than for HyperLTL. A HyperLTL formula can be checked by translating the LTL subformula into an automaton and then applying a series of automata transformations, such as self-composition to generate multiple traces, projection for existential quantification, and complementation for negation~\cite{FinkbeinerRS15,BeutnerF23}. For \sohyperltl{}, the model checking problem is, in general, undecidable. We introduce a method that nevertheless obtains sound results by over- and underapproximating the quantified sets of traces. For this purpose, we study \fphyperltl{}, a fragment of \sohyperltl{}, in which we restrict second-order quantification to the smallest or largest set satisfying some property. For example, to check common knowledge, it suffices to consider the \emph{smallest} set $X$ that is closed under the observations of all agents.
This smallest set $X$ is defined by the (monotone) fixpoint operation that adds, in each step, all traces that are indistinguishable to some trace already in $X$.

We develop an approximate model checking algorithm for \fphyperltl{} that uses bidirectional inference to deduce lower and upper bounds on second-order variables, interposed with first-order model checking in the style of HyperLTL. Our procedure is parametric in an oracle that provides (increasingly precise) lower and upper bounds. In the paper, we realize the oracles with \emph{fixpoint iteration} for underapproximations of the sets of traces assigned to the second-order variables, and \emph{automata learning} for overapproximations.
We report on encouraging experimental results with our model-checking algorithm, which has been implemented in a tool called \texttt{HySO}.

\section{Preliminaries}\label{sec:prelim}

For $n \in \nat$ we define $[n] := \{1, \ldots, n\}$.
We assume that $\ap$ is a finite set of atomic propositions and define $\Sigma := 2^\ap$.
For $t \in \Sigma^\omega$ and $i \in \nat$ define $t(i) \in \Sigma$ as the $i$th element in~$t$ (starting with the $0$th); and $t[i,\infty]$ for the infinite suffix starting at position~$i$.
For traces $t_1, \ldots, t_n \in \Sigma^\omega$ we write $\mathit{zip}(t_1, \ldots, t_n) \in (\Sigma^n)^\omega$ for the pointwise zipping of the traces, i.e., $\mathit{zip}(t_1, \ldots, t_n)(i) := (t_1(i), \ldots, t_n(i))$.

\paragraph{Transition systems.}\label{prelim:mealy:machines}
A \emph{transition system} is a tuple $\calT = (S, S_0, \kappa, L)$ where $S$ is a set of states, $S_0 \subseteq S$ is a set of initial states, $\kappa \subseteq S \times S$ is a transition relation, and $L : S \to \Sigma$ is a labeling function. 
A path in $\calT$ is an infinite state sequence $s_0s_1s_2 \cdots \in S^\omega$, s.t., $s_0 \in S_0$, and $(s_i, s_{i+1}) \in \kappa$ for all $i$.
The associated trace is given by $L(s_0)L(s_1)L(s_2) \cdots \in \Sigma^\omega$ and $\traces(\calT) \subseteq \Sigma^\omega$ denotes all traces of $\calT$.

\paragraph{Automata.}
A \emph{non-deterministic B{\"u}chi automaton} (NBA) \cite{Buechi62Decision} is a tuple $\mathcal{A}= (\Sigma, Q, q_0, \delta, F)$ where $\Sigma$ is a finite {alphabet}, $Q$ is a finite set of states, $Q_0\subseteq Q$ is the set of {initial states}, $F\subseteq Q$ is a set of {accepting states}, and $\delta : Q\times \Sigma \to 2^Q$ is the {transition function}. 
A run on a word $u \in \Sigma^\omega$ is an infinite sequence of states $q_0q_1q_2 \cdots \in Q^\omega$ such that $q_0 \in Q_0$ and for every $i \in \nat$, $q_{i+1} \in \delta(q_i, u(i))$.
The run is accepting if it visits states in $F$ infinitely many times, and we define the language of $\calA$, denoted $\mathcal{L}(\mathcal{A}) \subseteq \Sigma^\omega$, as all infinite words on which $\calA$ has an accepting run.

\paragraph{HyperLTL.}
HyperLTL \cite{ClarksonFKMRS14} is one of the most studied temporal logics for the specification of hyperproperties. 
We assume that $\tracevars$ is a fixed set of trace variables. 
For the most part, we use variations of $\pi$ (e.g., $\pi, \pi', \pi_1, \ldots$) to denote trace variables.
HyperLTL formulas are then generated by the grammar
\begin{align*}
\varphi &:= \quant \pi \ldot \varphi \mid \psi \\
\psi &:= a_\pi \mid \neg \psi \mid \psi \land \psi \mid \X \psi \mid \psi \U \psi 
\end{align*}
where $a \in \ap$ is an atomic proposition, $\pi \in \tracevars{}$ is a trace variable, $\quant \in \{\forall, \exists\}$ is a quantifier, and $\LTLnext$ and $\LTLuntil$ are the temporal operators \emph{next} and \emph{until}.

The semantics of HyperLTL is given with respect to a \emph{trace assignment} $\Pi$, which is a partial mapping $\Pi : \tracevars \rightharpoonup \Sigma^\omega$ that maps trace variables to traces. 
Given $\pi \in \tracevars$ and $t \in \Sigma^\omega$ we define $\traceassign[\pi \mapsto t]$ as the updated assignment that maps $\pi$ to $t$.
For $i \in \nat$ we define $\Pi[i, \infty]$ as the trace assignment defined by $\Pi[i, \infty](\pi) := \Pi(\pi)[i, \infty]$, i.e., we (synchronously) progress all traces by $i$ steps.
For quantifier-free formulas $\psi$ we follow the LTL semantics and define
\begin{align*}
	\Pi &\models  a_\pi &\text{iff} \quad  &a \in \Pi(\pi)(0)\\
	\Pi &\models  \neg \psi &\text{iff} \quad & \Pi \not\models  \psi \\
	\Pi &\models  \psi_1 \land \psi_2 &\text{iff} \quad  &\Pi \models \psi_1 \text{ and }  \Pi \models  \psi_2\\
	\Pi &\models  \X  \psi &\text{iff} \quad & \Pi[1,\infty] \models \psi \\
	\Pi &\models  \psi_1 \U \psi_2 &\text{iff} \quad & \exists i \in \nat \ldot \Pi[i,\infty]\models  \psi_2 \text{ and } \forall j < i\ldot  \Pi[j, \infty] \models  \psi_1\, .
\end{align*}
The indexed atomic propositions refer to a specific path in $\Pi$, i.e., $a_\pi$ holds iff $a$ holds on the trace bound to $\pi$.
 Quantifiers range over system traces: 
%
\begin{align*}
    \Pi \models_\calT \psi  \text{ iff }  \Pi \models\psi \quad\quad \text{and} \quad\quad \Pi \models_\calT  \quant \pi \ldot \varphi \text{ iff } \quant t \in \traces(\calT) \ldot \Pi[\pi \mapsto t] \models  \varphi\, .
\end{align*}
We write $\calT \models \varphi$ if $\emptyset \models_\calT \varphi$ where $\emptyset$ denotes the empty trace assignment.

\paragraph{HyperQPTL.} 
HyperQPTL~\cite{DBLP:phd/dnb/Rabe16} adds -- on top of the trace quantification of HyperLTL -- also propositional quantification (analogous to the propositional quantification that QPTL~\cite{qptl} adds on top of LTL). 
For example, HyperQPTL can express a promptness property which states that there must exist a bound (which is common among all traces), up to which an event must have happened.
We can express this as $\exists q. \forall \pi\ldot \F q \land (\neg q) \LTLuntil a_\pi$ which states that there exists an evaluation of proposition $q$ such that (1) $q$ holds at least once, and (2) for all traces $\pi$, $a$ holds on $\pi$ before the first occurrence of $q$.
See \cite{BeutnerF23} for details.

\section{Second-Order HyperLTL}\label{sec:second:order:hyperltl}

The (first-order) trace quantification in HyperLTL ranges over the set of all system traces; we thus cannot reason about arbitrary sets of traces as required for, e.g., common knowledge. 
We introduce a second-order extension of HyperLTL by introducing second-order variables (ranging over sets of traces) and allowing quantification over traces from any such set.
We present two variants of our logic that differ in the way quantification is resolved. 
In \sohyperltl{}, we quantify over arbitrary sets of traces.
While this yields a powerful and intuitive logic, second-order quantification is inherently non-constructive. 
During model checking, there thus does not exist an efficient way to even approximate possible witnesses for the sets of traces. 
To solve this quandary, we restrict \sohyperltl{} to \fphyperltl{}, where we instead quantify over sets of traces that satisfy some minimality or maximality constraint.
This allows for large fragments of \fphyperltl{} that admit algorithmic approximations to its model checking (by, e.g., using known techniques from fixpoint computations~\cite{tarski1955lattice,Winskel93}). 

\subsection{\oldsohyperltl{}}\label{sec:sohyperltl}
Alongside the set $\tracevars{}$ of trace variables, we use a set $\sovars{}$ of second-order variables (which we, for the most part, denote with capital letters $X, Y, ...$). 
We assume that there is a special variable $\systemvar \in \sovars{}$ that refers to the set of traces of the given system at hand, and a variable $\allvar \in \sovars{}$ that refers to the set of all traces.
We define the \sohyperltl{} syntax by the following grammar:
\begin{align*}
\varphi &:= \quant \pi \in X \ldot \varphi \mid \quant X\ldot \varphi \mid \psi \\ 
\psi &:= a_\pi \mid \neg \psi \mid \psi \land \psi \mid \X \psi \mid \psi \U \psi  
\end{align*}
where $a \in \ap{}$ is an atomic proposition, $\pi \in \tracevars{}$ is a trace variable, $X \in \sovars{}$ is a second-order variable, and $\quant \in \{\forall, \exists\}$ is a quantifier.
We also consider the usual derived Boolean constants ($\true$, $\false$) and connectives ($\lor$, $\rightarrow$, $\leftrightarrow$) as well as the temporal operators \emph{eventually} ($\F \psi := \true \U \psi$) and \emph{globally} ($\G \psi := \neg \F \neg \psi$).
Given a set of atomic propositions $P \subseteq \ap$ and two trace variables $\pi, \pi'$, we abbreviate $\pi =_P \pi' := \bigwedge_{a \in P} (a_\pi \leftrightarrow a_{\pi'})$.

\subsubsection*{Semantics.}
Apart from a trace assignment $\Pi$ (as in the semantics of HyperLTL), we maintain a second-order assignment $\Delta : \sovars{} \rightharpoonup 2^{\Sigma^\omega}$ mapping second-order variables to \emph{sets of traces}. 
Given $X \in \sovars$ and $A \subseteq \Sigma^\omega$ we define the updated assignment $\Delta[X \mapsto A]$ as expected.
Quantifier-free formulas $\psi$ are then evaluated in a fixed trace assignment as for HyperLTL (cf.~\Cref{sec:prelim}).
For the quantifier prefix we define:
\begin{align*}
     \Pi, \Delta &\models\psi  &\text{iff} \quad &\Pi \models\psi\\
	\Pi, \Delta &\models  \quant \pi \in X \ldot \varphi &\text{iff} \quad &\quant t \in \Delta(X) \ldot \Pi[\pi \mapsto t], \Delta \models  \varphi\\
	\Pi, \Delta &\models  \quant X \ldot \varphi &\text{iff} \quad &\quant A \subseteq \Sigma^\omega \ldot \Pi, \Delta[X \mapsto A] \models  \varphi 
\end{align*}
Second-order quantification updates $\Delta$ with a set of traces, and first-order quantification updates $\Pi$ by quantifying over traces within the set defined by $\Delta$.

Initially, we evaluate a formula in the empty trace assignment and fix the valuation of the special second-order variable $\systemvar$ to be the set of all system traces and $\allvar$ to be the set of all traces.
That is, given a system $\calT$ and \sohyperltl{} formula $\varphi$, we say that $\calT$ satisfies $\varphi$, written $\calT \models \varphi$, if $\emptyset, [\systemvar \mapsto \traces(\calT), \allvar \mapsto \Sigma^\omega] \models \varphi$, where we write $\emptyset$ for the empty trace assignment.
The model-checking problem for \sohyperltl{} is checking whether $\calT \models \varphi$ holds.

\sohyperltl{} naturally generalizes HyperLTL by adding second-order quantification. 
As sets range over \emph{arbitrary} traces, \sohyperltl{} also subsumes the~more powerful logic HyperQPTL.  
The proof of~\Cref{lemma:ltlto2} is given in~\Cref{app:qptl}. 

\begin{restatable}{lemma}{hyperltlIntoSoHyper}\label{lemma:ltlto2}
    \sohyperltl{} subsumes \hyperqptl{} (and thus also \hyperltl{}).
\end{restatable}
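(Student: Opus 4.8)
The plan is to exhibit, for every \hyperqptl{} formula $\varphi$, an equivalent \sohyperltl{} formula $\varphi^*$, i.e., one that is satisfied by exactly the same transition systems. Since \hyperltl{} is precisely the fragment of \hyperqptl{} without propositional quantifiers, this simultaneously yields the \hyperltl{} claim. The two logics share the same temporal matrix, so the only real work is to re-express the two kinds of quantifiers of \hyperqptl{}---trace quantifiers and propositional quantifiers---using the second-order machinery of \sohyperltl{}.

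The translation is syntax-directed. First I would relativize every trace quantifier, i.e., $\quant \pi \ldot \chi \mapsto \quant \pi \in \systemvar \ldot \chi$; this is sound because in both logics trace quantifiers range over $\traces(\calT)$, which is exactly the interpretation fixed for $\systemvar$. The crux is the propositional quantifier: a quantifier $\quant q$ in \hyperqptl{} ranges over an arbitrary $\omega$-sequence of truth values for the fresh proposition $q$. I would simulate it by a first-order trace quantifier over the all-traces variable $\allvar$: introduce a fresh trace variable $\pi_q$, translate $\quant q \ldot \chi \mapsto \quant \pi_q \in \allvar \ldot \chi$, and replace every (unindexed) occurrence of $q$ in the matrix by the indexed proposition $q_{\pi_q}$. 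Working over an alphabet $\Sigma = 2^{\ap}$ in which the quantified propositions are included, the trace $\pi_q \in \allvar = \Sigma^\omega$ can carry any $q$-projection, so quantifying over $\pi_q \in \allvar$ and reading $q$ off $\pi_q$ realizes exactly the propositional quantification of \hyperqptl{}. Because the translation treats each quantifier independently, it handles arbitrary interleavings of trace and propositional quantifiers in the prefix.

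Correctness I would establish by induction on the formula, maintaining the invariant that a \hyperqptl{} configuration---a trace assignment together with a propositional assignment $\sigma$ for the quantified propositions---corresponds to the \sohyperltl{} configuration obtained by storing each propositional sequence $\sigma(q)$ as (the $q$-projection of) the trace bound to $\pi_q$, under the fixed second-order assignment with $\systemvar \mapsto \traces(\calT)$ and $\allvar \mapsto \Sigma^\omega$. The base cases $a_\pi$ and the propositional atom $q$ follow from the definition of the indexing, and the Boolean and temporal cases are immediate since $\X$ advances all traces synchronously in both semantics, so $\pi_q$ stays aligned in time with $q$. The two quantifier cases reduce to the observations above.

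The main obstacle, and the only point requiring care, is the alphabet bookkeeping underlying the propositional case: one must fix $\ap$ large enough to contain the quantified propositions so that the projection of $\allvar$ onto each $q$ is surjective onto $(2^{\{q\}})^\omega$, and simultaneously argue that including $q$ in $\ap$ does not perturb the meaning of the original trace quantifiers. The latter holds because the translated matrix reads $q$ only through $q_{\pi_q}$ and never through a system-trace variable, so the extra propositions carried by system traces are never inspected. Making this projection-plus-conservativity argument precise is the heart of the induction; everything else is a routine structural matching of the two semantics.
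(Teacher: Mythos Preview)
Your proposal is correct and follows essentially the same approach as the paper: relativize system-trace quantifiers to $\systemvar$ and simulate each propositional quantifier $\quant q$ by a fresh trace quantifier $\quant \pi_q \in \allvar$, reading $q$ off $\pi_q$ in the body. The paper's proof is terser---it simply invokes the known trace-based presentation of \hyperqptl{} and maps those arbitrary-trace quantifiers to $\allvar$---whereas you spell out the induction and the alphabet/projection bookkeeping; but the underlying translation is the same.
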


\subsubsection*{Syntactic Sugar.}
In \sohyperltl{}, we can quantify over traces within a second-order variable, but we cannot state, within the body of the formula, that some path is a member of some second-order variable.
For that, we define $\pi \modin X$ (as an atom within the body) as syntactic sugar for $\exists \pi' \in X. \G(\pi' =_\ap \pi)$, i.e., $\pi$ is in $X$ if there exists some trace in $X$ that agrees with $\pi$ on all propositions. 
Note that we can only use $\pi \modin X$ \emph{outside} of the scope of any temporal operators; this ensures that we can bring the resulting formula into a form that conforms to the \sohyperltl{} syntax.

\subsection{\oldfphyperltl{}}\label{sec:fphyperltl}

The semantics of \sohyperltl{} quantifies over arbitrary sets of traces, making even approximations to its semantics challenging. 
We propose \fphyperltl{} as a restriction that only quantifies over sets that are subject to an additional minimality or maximality constraint. 
For large classes of formulas, we show that this admits effective model-checking approximations.
We define \fphyperltl{} by the following grammar:
\begin{align*}
\varphi &:= \quant\,\pi \in X \ldot \varphi \mid \quant\,  (X, \fptype{}, \varphi) \ldot \varphi \mid \psi\\
\psi &:= a_\pi \mid \neg \psi \mid \psi \land \psi \mid \X \psi \mid \psi \U \psi  
\end{align*}
where $a \in \ap$, $\pi \in \tracevars{}$, $X \in \sovars$, $\quant \in \{\forall, \exists\}$, and $\fptype{} \in \{\curlywedge, \curlyvee\}$ determines if we consider smallest ($\smallestSet$) or largest ($\largestSet$) sets.  
For example, the formula $\exists\, (X, \smallestSet, \varphi_1) \ldot \varphi_2$ holds if there exists some set of traces $X$, that satisfies both $\varphi_1$ and $\varphi_2$, and is \emph{a} smallest set that satisfies~$\varphi_1$.
Such minimality and maximality constraints with respect to a (hyper)property arise naturally in many properties.
Examples include common knowledge (cf.~\Cref{sec:running:example}), asynchronous hyperproperties (cf.~\Cref{sec:asynchronous:hyperproperties}), and causality in reactive systems~\cite{DBLP:conf/atva/CoenenFFHMS22,DBLP:conf/cav/CoenenDFFHHMS22}.

\subsubsection*{Semantics.}
For path formulas, the semantics of \fphyperltl{} is defined analogously to that of \sohyperltl{} and HyperLTL.
For the quantifier prefix we define:
\begin{align*}
    \Pi, \Delta &\models  \psi &\text{iff} \quad &\Pi \models \psi\\
	\Pi, \Delta &\models  \quant \pi \in X \ldot \varphi &\text{iff} \quad &\quant t \in \Delta(X) \ldot \Pi[\pi \mapsto t], \Delta \models  \varphi\\
	\Pi, \Delta &\models  \quant  (X, \fptype{}, \varphi_1) \ldot \varphi_2 &\text{iff} \quad &\quant A \in \mathit{sol}(\Pi, \Delta, (X, \fptype{}, \varphi_1)) \ldot \Pi, \Delta[X \mapsto A] \models  \varphi_2
\end{align*}
where $\mathit{sol}(\Pi, \Delta, (X, \fptype{}, \varphi_1))$ denotes all solutions to the minimality/maximality condition given by $\varphi_1$, which we define by mutual recursion as follows:\\
\scalebox{0.95}{\parbox{\linewidth}{
\begin{align*}
    \mathit{sol}(\Pi, \Delta, (X, \smallestSet, \varphi))&:= \{
    A \subseteq \Sigma^\omega \mid \Pi, \Delta[X \mapsto A] \models \varphi \land \forall A' \subsetneq A\ldot \Pi, \Delta[X \mapsto A'] \not\models \varphi
    \}\\
    \mathit{sol}(\Pi, \Delta, (X, \largestSet, \varphi))&:= \{
    A \subseteq \Sigma^\omega \mid \Pi, \Delta[X \mapsto A] \models \varphi \land \forall A' \supsetneq A\ldot  \Pi, \Delta[X \mapsto A'] \not\models \varphi
    \}
\end{align*}
}}\\
A set $A$ satisfies the minimality/maximality constraint if it satisfies $\varphi$ and is a least (in case $\fptype{} = \smallestSet$) or greatest (in case $\fptype{} = \largestSet$) set that satisfies $\varphi$.

Note that $\mathit{sol}(\Pi, \Delta, (X,\fptype{}, \varphi))$ can contain multiple sets or no set at all, i.e., there may not exists a unique least or greatest set that satisfies $\varphi$.
In \fphyperltl{}, we therefore add an additional quantification over the set of all solutions to the minimality/maximality constraint. 
When discussing our model checking approximation algorithm, we present a (syntactic) restriction on $\varphi$ which guarantees that $\mathit{sol}(\Pi, \Delta, (X, \fptype{}, \varphi))$ contains a unique element (i.e., is a singleton set). 
Moreover, our restriction allows us to employ fixpoint techniques to find approximations to this unique solution.
In case the solution for $(X,\fptype{}, \varphi)$ is unique, we often omit the leading quantifier and simply write $(X, \fptype{}, \varphi)$ instead of $\quant (X,\fptype{}, \varphi)$.

As we can encode the minimality/maximality constraints of \fphyperltl{} in \sohyperltl{} (see~\Cref{app:fptoso}), we have the following:

\begin{restatable}{proposition}{fpToSo}\label{prop:fptoso}
Any \fphyperltl{} formula $\varphi$ can be effectively translated into an \sohyperltl{} formula $\varphi'$ such that for all transition systems $\calT$ we have $\calT \models \varphi$ iff $\calT \models \varphi'$.
\end{restatable}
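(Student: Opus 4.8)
The plan is to proceed by induction on the structure of $\varphi$, producing for each $\fphyperltl{}$ subformula an $\sohyperltl{}$ subformula that is satisfied by exactly the same pairs $\Pi, \Delta$; applying this at the top level (where both logics fix $\Delta = [\systemvar \mapsto \traces(\calT), \allvar \mapsto \Sigma^\omega]$ and $\Pi = \emptyset$) then yields $\calT \models \varphi \iff \calT \models \varphi'$. The atomic, Boolean, and temporal cases, as well as the first-order quantifiers $\quant \pi \in X$, are literally the same in both grammars and are discharged by the induction hypothesis. The only genuine work is the fixpoint quantifier $\quant (X, \fptype{}, \varphi_1)\ldot \varphi_2$: the idea is to drop the built-in minimality/maximality side condition and instead range over \emph{arbitrary} sets with an ordinary $\sohyperltl{}$ quantifier $\quant X$, while spelling out the side condition as an explicit formula using the membership sugar $\modin$.

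Writing $\hat{\cdot}$ for the recursive translation, the building block is (strict) set inclusion. I encode $X' \subseteq X$ as $\mathit{sub}(X',X) := \forall \pi \in X'\ldot \pi \modin X$ and $X' \subsetneq X$ as $\mathit{psub}(X',X) := \mathit{sub}(X',X) \land \exists \pi'' \in X\ldot \neg(\pi'' \modin X')$. The assertion ``$A$ is a $\smallestSet$-minimal model of $\varphi_1$'' is then captured by $\mathit{min}(X) := \hat\varphi_1 \land \neg\exists X'\ldot\big(\mathit{psub}(X',X) \land \hat\varphi_1[X'/X]\big)$, where $\hat\varphi_1[X'/X]$ renames the free occurrences of $X$ to a fresh second-order variable $X'$. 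Using this, I translate $\exists (X, \smallestSet, \varphi_1)\ldot\varphi_2$ to $\exists X\ldot(\mathit{min}(X) \land \hat\varphi_2)$ and $\forall (X, \smallestSet, \varphi_1)\ldot\varphi_2$ to $\forall X\ldot(\mathit{min}(X) \rightarrow \hat\varphi_2)$; the $\largestSet$ cases are dual, replacing $\mathit{psub}(X',X)$ by $\mathit{psub}(X,X')$. Correctness of this single step is immediate from the definition of $\mathit{sol}$ and the induction hypothesis: the sets $A$ assigned to $X$ that satisfy $\mathit{min}(X)$ are precisely the elements of $\mathit{sol}(\Pi, \Delta, (X, \smallestSet, \varphi_1))$, so the $\exists$/$\forall$ encodings match the fixpoint quantifier semantics (including the degenerate cases where $\mathit{sol}$ is empty).

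The main obstacle is that the formula produced above is \emph{not yet} syntactically an $\sohyperltl{}$ formula: $\mathit{min}(X)$ contains a negated, nested second-order quantifier, and every occurrence of $\modin$ hides a first-order quantifier $\exists \pi' \in X\ldot \G(\pi' =_\ap \pi)$. The real content of the proof is therefore a prenex-normal-form lemma stating that $\sohyperltl{}$ is closed under negation and Boolean combinations. I would prove it the standard way: negation is pushed through the prefix by dualizing each quantifier ($\neg\exists \mapsto \forall\neg$ for both $\quant X$ and $\quant\pi\in X$) and negating the quantifier-free LTL matrix, while conjunctions and disjunctions of quantified formulas are merged by renaming bound variables apart and pulling their prefixes to the front.

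Two points require care in that lemma. First, a first-order quantifier $\quant\pi\in X$ must never be hoisted past the binder of its second-order variable $X$; in the encoding above $X$ (resp.\ $X'$) is always introduced before any membership test referring to it, so the natural left-to-right dependency order $X, X', \pi, \ldots$ is preserved during prenexing. Second, hoisting the hidden quantifier of $\modin$ is sound only because $\modin$ is used \emph{outside} the scope of any temporal operator (as its definition requires): after extracting $\exists\pi'\in X$ the residual body $\G(\pi' =_\ap \pi)$ is pure LTL and combines with the rest of the matrix without entering or capturing a temporal context. Combining this prenex lemma with the inductive translation gives the required $\sohyperltl{}$ formula $\varphi'$, completing the proof.
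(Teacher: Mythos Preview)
Your proposal is correct and follows essentially the same route as the paper: an inductive translation that leaves path formulas and first-order quantifiers unchanged, and replaces each $\quant(X,\fptype{},\varphi_1)\ldot\varphi_2$ by an unrestricted $\quant X$ together with an explicit minimality/maximality side condition expressed via a (strict) subset encoding and $\modin$, followed by a prenexing step. Your treatment of the prenex-normal-form issue is in fact more detailed than the paper's, which simply observes that no quantifier ends up under a temporal operator and leaves the rest implicit.
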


\begin{figure}[t]
     \centering
    \begin{minipage}{0.49\textwidth}
    \centering
    \resizebox{.65\linewidth}{!}{
    \tikzstyle{state}=[draw, circle, fill=none, minimum width=.8cm, 
minimum height = .8cm, thick]

\begin{tikzpicture}[->,>=stealth',shorten >= 1pt,auto]

\node[state] (s1) [] {%
   $b$
 };

\node[state] (s0) [above left = 0.2 and 1 of s1]{%
   $a$
};

\node (left) [left = 0.5 of s0, draw=none]{};

\node[state] (s2) [above right = 0.2 and 1 of s1] {%
    $d$
 };

  \node[state] (s3) [below= .7 of s1] {%
    $c$
 };




\path (left) edge (s0)
      (s0) edge[thick, bend left=0, align=center] node[above, sloped] {%
      } (s1)
      (s0) edge[loop above, thick,  align=center] node[above,sloped] {%
      } (0)
      (s0) edge[thick, bend right, align=center] node[sloped,below] {%
      } (s3)
      (s0) edge[thick, bend left, align=center] node[sloped,below] {%
      } (s2)
      (s1) edge[thick, bend right=0, align=center] node[sloped,below] {%
      } (s2)
      (s1) edge[thick, bend right=0, align=center] node[sloped,below] {%
      } (s3)
      (s2) edge[loop above, thick, align=center] node[sloped,below] {%
      } (s2)
      (s3) edge[loop below, thick, align=center] node[sloped,below] {%
      } (s3)
      (s3) edge[bend right, thick,align=center] node[sloped,below] {%
      } (s2)
      ;
\end{tikzpicture}
    }
    \end{minipage}
    \begin{minipage}{.49\textwidth}
    \centering
    \begin{small}
   \begin{align*}
   \pi =&\ a^nd^\omega \\
    K_2 (\pi) = &\ a^{n-1}bd^\omega\\
    K_1K_2(\pi) = &\ a^{n-1}cd^\omega\\
    K_2K_1K_2(\pi) = &\ a^{n-2}bcd^\omega\\
   \ldots\\
    K_1K_2\ldots K_2(\pi) =&\ ac^{n-1}d^\omega
   \end{align*}
   \end{small}
   \end{minipage}
    \caption{Left: An example for a multi-agent system with two agents, where agent~1 observes $a$ and $d$, and agent 2 observes $c$ and $d$. Right: The iterative construction of the traces to be considered for common knowledge starting with $a^nd^\omega$.}
        \label{fig:common:knowledge}
\end{figure}

\subsection{Common Knowledge in Multi-Agent Systems} \label{sec:running:example}
To explain common knowledge, we use a variation of an example from~\cite{Meyden98}, and encode it in \fphyperltl{}. 
\Cref{fig:common:knowledge}(left) shows a transition system of a distributed system with two agents, agent~$1$ and agent $2$.
Agent~$1$ observes variables $a$ and $d$, whereas agent $2$ observes $c$ and~$d$.
The property of interest is \emph{starting from the trace $\pi = a^n d^\omega$ for some fixed $ n > 1$, is it common knowledge for the two agents that $a$ holds in the second step}.
It is trivial to see that $\LTLnext a$ holds on~$\pi$.
However, for common knowledge, we consider the (possibly) infinite chain of observationally equivalent traces. 
For example, agent $2$ cannot distinguish the traces $a^nd^\omega$ and $a^{n-1}bd^\omega$. Therefore, agent $2$ only knows that $\LTLnext a$ holds on $\pi$ if it also holds on $ \pi' = a^{n-1}bd^\omega$.
For common knowledge, agent $1$ also has to know that agent~$2$ knows $\LTLnext a$, which means that for all traces that are indistinguishable from $\pi$ or~$\pi'$ for agent $1$,  
$\LTLnext a$ has to hold.
This adds $\pi'' =  a^{n-1}cd^\omega$ to the set of traces to verify $\LTLnext a$ against.
This chain of reasoning continues as shown in \Cref{fig:common:knowledge}(right). In the last step we add $ac^{n-1}d^\omega$ to the set of indistinguishable traces, concluding that  $\LTLnext a$ is not common knowledge.

The following \fphyperltl{} formula specifies the property stated above.
The abbreviation $\mathit{obs}(\pi_1, \pi_2) :=  \G (\pi_1 =_{\{a, d\}} \pi_2) \vee \G(\pi_1 =_{\{c, d\}} \pi_2)$ denotes that $\pi_1$ and $\pi_2$ are observationally equivalent for either agent 1 or agent 2.
\begin{align*}
    &\forall \pi \in \systemvar. \big( \bigwedge_{i=0}^{n-1} \X^i a_\pi \land \X^{n}\G d_\pi\big) \to \\
    &\quad\Big(X, \smallestSet,  \pi \modin X  \land \big(\forall \pi_1 \in X. \forall \pi_2 \in \systemvar{}\ldot \mathit{obs}(\pi_1,\pi_2) \rightarrow \pi_2 \modin X  \big)\Big)\ldot \forall \pi' \in X. \LTLnext a_{\pi'}
\end{align*}

For a trace $\pi$ of the form $\pi = a^n d^\omega$, the set $X$ represents the \emph{common knowledge set} on~$\pi$.
This set $X$ is the smallest set that (1) contains $\pi$ (expressed using our syntactic sugar $\modin$); and (2) is closed under observations by either agent, i.e., if we find some $\pi_1 \in X$ and some system trace $\pi_2$ that are observationally equivalent, $\pi_2$ should also be in $X$.  
Note that this set is unique (due to the minimality restriction), so we do not quantify it explicitly. 
Lastly, we require that all traces in $X$ satisfy the property $\LTLnext a$. 
All sets that satisfy this formula would also include the trace $ac^{n-1}d^\omega$, and therefore no such $X$ exists; thus, we can conclude that starting from trace $a^nd^\omega$, it is \emph{not} common knowledge that $\LTLnext a$ holds. 

On the other hand, it \emph{is} common knowledge that $a$ holds in the \emph{first} step (cf.~\Cref{sec:implementation}).

\subsection{\oldsohyperltl{} Model Checking}\label{sec:sohyperltl_mc}

As \sohyperltl{} and \fphyperltl{} allow quantification over arbitrary sets of traces, we can encode the satisfiability of \HyperQPTL{} (i.e., the question of whether some set of traces satisfies a formula) within their model-checking problem; rendering the model-checking problem highly undecidable~\cite{FortinKT021}, even for very simple formulas \cite{BeutnerCFHK22}.

\begin{restatable}{proposition}{hyperSatToSo}\label{prop:hyperltl_sat_in_sohyperltl_mc}\label{corr:sohyper_hardness}
    For any \HyperQPTL{} formula $\varphi$ there exists a \sohyperltl{} formula $\varphi'$ such that $\varphi$ is satisfiable iff $\varphi'$ holds on some arbitrary transition system.
    The model-checking problem of \sohyperltl{} is thus highly undecidable ($\Sigma_1^1$-hard).
\end{restatable}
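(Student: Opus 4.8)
The plan is to reduce the satisfiability problem of \HyperQPTL{} to the model-checking problem of \sohyperltl{}, and then invoke the known $\Sigma_1^1$-hardness of \HyperQPTL{} (indeed already of \HyperLTL{}) satisfiability~\cite{FortinKT021}. Recall that a \HyperQPTL{} formula $\varphi$ is satisfiable iff there exists a set $T \subseteq \Sigma^\omega$ of traces that satisfies $\varphi$ under the \HyperQPTL{} semantics. The idea is to use a single existential second-order quantifier to guess this witness set $T$, and to carry out the rest of the evaluation relative to the guessed set, entirely independently of the system at hand.

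Concretely, I would construct $\varphi'$ as $\exists X\ldot \hat\varphi$, where $\hat\varphi$ is obtained from $\varphi$ by a syntax-directed rewriting of its quantifier prefix. Every trace quantifier $\quant \pi$ of $\varphi$ is relativized to the guessed set, i.e., replaced by $\quant \pi \in X$, so that all trace quantification of $\varphi$ ranges exactly over $T = \Delta(X)$. Every propositional quantifier $\quant q$ of $\varphi$ is encoded by quantifying a fresh trace over the set of \emph{all} traces, i.e., replaced by $\quant \pi_q \in \allvar$, and every occurrence of the quantified proposition $q$ is replaced by the atom $a_{\pi_q}$ for a distinguished atomic proposition $a$; since $\allvar$ denotes $\Sigma^\omega$, the trace $\pi_q$ can realize every possible $\omega$-sequence of truth values for $q$, so this faithfully simulates propositional quantification. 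This last step is exactly the construction underlying~\Cref{lemma:ltlto2}, which I would reuse.

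It then remains to argue correctness and system-independence. By construction, $\hat\varphi$ refers only to the second-order variables $X$ and $\allvar$, and never to $\systemvar$; hence, for any transition system $\calT$, the truth of $\varphi'$ under $[\systemvar \mapsto \traces(\calT), \allvar \mapsto \Sigma^\omega]$ does not depend on $\traces(\calT)$. An unfolding of the \sohyperltl{} semantics shows that $\calT \models \varphi'$ holds iff there is a set $A \subseteq \Sigma^\omega$ with $\Pi, [\allvar \mapsto \Sigma^\omega, X \mapsto A] \models \hat\varphi$, which by the correctness of the relativization and of the propositional encoding is equivalent to the existence of a model $T = A$ of $\varphi$, i.e., to the satisfiability of $\varphi$. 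Since this holds for \emph{every} $\calT$, the formula $\varphi'$ holds on some transition system iff it holds on all of them iff $\varphi$ is satisfiable. Combining this reduction with the $\Sigma_1^1$-hardness of \HyperQPTL{} satisfiability yields the $\Sigma_1^1$-hardness of \sohyperltl{} model checking.

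The main obstacle I expect is the correctness of the propositional-quantifier encoding: one has to check that binding $q$ to an atom of a freely chosen trace over $\allvar$ induces precisely the same set of admissible valuations of $q$ as the QPTL-style propositional quantifier, and that the synchronous advancement of the trace assignment in the \sohyperltl{} semantics matches the time-step semantics of $q$. This is routine but must be set up carefully, and it is precisely the argument already carried out for~\Cref{lemma:ltlto2}, so the bulk of the work is bookkeeping rather than new ideas.
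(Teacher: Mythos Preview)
Your proposal is correct and follows essentially the same construction as the paper: introduce a single existential second-order variable $X$, relativize all trace quantifiers to $X$, and simulate each propositional quantifier $\quant q$ by $\quant \pi_q \in \allvar$ with $q$ replaced by $a_{\pi_q}$. Your additional remarks on system-independence (via the absence of $\systemvar$) and on reusing the argument of \Cref{lemma:ltlto2} are sound and slightly more explicit than the paper's own proof.
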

\begin{proof}
    Let $\varphi'$ be the \sohyperltl{} formula obtained from $\varphi$ by replacing each \HyperQPTL{} trace quantifier $\quant \pi$ with the \sohyperltl{} quantifier $\quant \pi \in X$, and each propositional quantifier $\quant q$ with $\quant \pi_q \in \allvar$ for some fresh trace variable $\pi_q$.
    In the body, we replace each propositional variable $q$ with $a_{\pi_q}$ for some fixed proposition $a \in \ap$.
    Then, $\varphi$ is satisfiable iff the \sohyperltl{} formula $\exists X. \varphi'$ holds in some arbitrary system. 
    \qed
\end{proof}

\fphyperltl{} cannot express \HyperQPTL{} satisfiability directly. If there exists a model of a \HyperQPTL{} formula, there may not exist a least one. 
However, model checking of \fphyperltl{} is also highly undecidable.

 \begin{restatable}{proposition}{fphyperUndec}\label{lem:fphyper_hardness}
    The model-checking problem of \fphyperltl{} is $\Sigma_1^1$-hard.
\end{restatable}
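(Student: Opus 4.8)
The plan is to reduce from a classical $\Sigma_1^1$-complete problem about the existence of an infinite \emph{recurring} computation — concretely, the repeated-reachability problem for nondeterministic two-counter (Minsky) machines: given such a machine $M$, does $M$ admit an infinite run from its initial configuration that visits a designated control state $q_f$ infinitely often? (The recurrent tiling problem would serve equally well.) Given $M$, I will construct a transition system $\calT_M$ and an \fphyperltl{} formula $\varphi_M$ with $\calT_M \models \varphi_M$ iff $M$ has such a recurring run. The key idea is to encode a \emph{single} infinite run as a \emph{set} of traces, one trace per configuration, and to force this set to arise as a \emph{smallest} solution of a closure constraint, which is exactly the kind of second-order quantification \fphyperltl{} offers.

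I let $\calT_M$ generate precisely the well-formed configuration encodings: each trace carries the control state and the two counter values in unary on dedicated, position-aligned tracks, together with a single time pulse that marks its index $i$ in the run on a separate track. Laying the counters out identically on every trace (independently of the time pulse) is what lets the synchronous, HyperLTL-style body compare the counters of two configurations position by position and thereby check that one is a correct Minsky successor of the other — the standard device for handling unbounded counters with a finite-state body. Write $\mathit{succ}(\pi,\pi')$ for the path formula asserting that $\pi'$ is time-stamped one step after $\pi$ and encodes a valid successor configuration of $\pi$, and $\mathit{init}(\pi)$ for ``$\pi$ encodes the initial configuration at time $0$''.

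I then take
\[
  \varphi_M \;:=\; \exists\, (X, \smallestSet, \varphi_1)\ldot \varphi_2 ,
\]
where $\varphi_1 := (\exists \pi \in X\ldot \mathit{init}(\pi)) \wedge (\forall \pi \in X\ldot \exists \pi' \in \systemvar\ldot \mathit{succ}(\pi,\pi') \wedge \pi' \modin X)$ says that $X$ contains the initial configuration and is closed under taking a successor, and $\varphi_2 := \forall \pi \in X\ldot \exists \pi' \in X\ldot (\pi' \text{ later than } \pi) \wedge (q_f)_{\pi'}$ says that $q_f$ occurs at infinitely many time stamps in $X$ (the ``later than'' comparison just reads off the two time pulses). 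The crucial observation is that the \emph{minimal} solutions of $\varphi_1$ are exactly the trace sets of \emph{single} infinite runs of $M$: minimality forbids any superfluous trace, so each configuration in $X$ contributes exactly one successor, yielding a non-branching chain $C_0, C_1, C_2, \ldots$; the strictly increasing time pulses rule out cycles and finite sets, so the chain is genuinely infinite; and a configuration that deadlocks yields no solution at all. Conversely, the trace set of any single run is a solution of $\varphi_1$ and, being a chain, is minimal. Since the leading quantifier ranges existentially over $\mathit{sol}(\cdot)$, the formula $\varphi_M$ holds iff some infinite run of $M$ satisfies the recurrence condition $\varphi_2$, i.e.\ iff $M$ recurs, establishing $\Sigma_1^1$-hardness.

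The main obstacle — and the point that distinguishes this from the \sohyperltl{} argument of \Cref{prop:hyperltl_sat_in_sohyperltl_mc}, where one simply writes $\exists X$ over arbitrary sets — is that \fphyperltl{} quantifies only over minimal (or maximal) solutions, so I cannot directly guess the run. The work therefore lies in engineering $\varphi_1$ so that its minimal solutions are \emph{precisely} the single-run encodings: I must verify that minimality excludes branching and disconnected ``floating'' chains (any such extra material can be deleted while preserving $\mathit{init}$ and closure, so it is never minimal), that solutions can be neither finite nor eventually cyclic (ensured by the strictly increasing time stamps), and that whenever a solution exists a minimal one does too (every solution contains a single chain through $C_0$), so that the existential over $\mathit{sol}(\cdot)$ faithfully captures ``there exists an infinite run.'' The remaining ingredients — the system $\calT_M$, the successor check $\mathit{succ}$, and the recurrence check $\varphi_2$ — are routine adaptations of the standard HyperLTL machine encoding, requiring only the synchronous alignment of the counter tracks.
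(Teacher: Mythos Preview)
Your proposal is correct and follows the same high-level route as the paper: reduce from the recurrence problem for nondeterministic two-counter machines, encode each configuration as an infinite trace carrying the control state, the two counters in unary, and a unique time pulse, and use the second-order variable $X$ to collect the configurations of a single infinite run.

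The one noteworthy technical difference is in how minimality is handled. The paper adds an explicit \emph{uniqueness} constraint to $\varphi$ (any two traces in $X$ whose time pulses coincide must be equal) and also folds the recurrence check into $\varphi$ via an auxiliary countdown counter; this makes \emph{every} solution of $\varphi$ already minimal, so the outer formula is simply $\exists (X,\smallestSet,\varphi)\ldot\mathit{true}$ and no analysis of what the minimal solutions look like is needed. You instead omit uniqueness, put only ``contains the initial configuration and is successor-closed'' into $\varphi_1$, and place recurrence in the body $\varphi_2$; you then rely on minimality itself to prune branching and floating chains, which obliges you to argue (as you do) that the minimal solutions of $\varphi_1$ are exactly the single-run encodings and that a minimal solution exists whenever any solution does. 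Both mechanisms work; the paper's buys a shorter correctness argument, yours avoids the auxiliary counter and keeps the recurrence check readable.
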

\begin{proof}[Sketch]
    We can encode the existence of a \emph{recurrent} computation of a Turing machine, which is known to be $\Sigma_1^1$-hard \cite{AlurH94}.\qed
\end{proof}

Conversely, the \emph{existential} fragment of \sohyperltl{} can be encoded back into \hyperqptl{} satisfiability:

\begin{restatable}{proposition}{sohyperToSat}
    Let $\varphi$ be a \sohyperltl{} formula that uses only existential second-order quantification and $\calT$ be any system.
    We can effectively construct a formula $\varphi'$ in \hyperqptl{} such that $\calT \models \varphi$ iff $\varphi'$ is satisfiable.
\end{restatable}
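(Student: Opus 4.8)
The plan is to reduce the model-checking problem $\calT \models \varphi$ to \hyperqptl{} satisfiability by exploiting the fact that, for satisfiability, one gets to choose the model (a set of traces) existentially, and this existential choice can absorb the existential second-order quantifiers of $\varphi$. Concretely, I would fix an extended alphabet $\ap' := \ap \uplus \{x_X \mid X \text{ existentially quantified in } \varphi\}$, using one fresh \emph{membership marker} $x_X$ per existential second-order variable. A \hyperqptl{} model $T$ over $\ap'$ then encodes, for each such $X$, the set $A_X := \{t|_\ap \mid t \in T,\ x_X \in t(0)\}$; that is, a trace's marker at position $0$ records its membership in $X$. Since the $A_X$ may be arbitrary subsets of $\Sigma^\omega$ and \hyperqptl{} models may be arbitrary trace sets, this causes no loss. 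The reduction is arranged so that $\varphi'$ is satisfiable exactly when some such $T$ supplies witnesses for all existential set quantifiers.

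The special variables are handled \emph{without} using the model. Since a trace over $\ap$ is just an $|\ap|$-tuple of propositional $\omega$-sequences, a first-order quantifier $\quant \pi \in \allvar$ becomes $|\ap|$ propositional quantifiers $\quant (q_a)_{a \in \ap}$, with every body atom $a_\pi$ replaced by $q_a$ — exactly the correspondence already used in the proof of \Cref{corr:sohyper_hardness}. For $\systemvar$ I use the same propositional encoding but relativize to system traces: because $\calT$ is finite, ``the guessed valuation is a trace of $\calT$'' is a QPTL-expressible predicate $\mathit{path}_\calT$, obtained by guessing (with further propositional quantifiers) a run that starts in $S_0$, respects $\kappa$, and is labeled consistently with $L$. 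Thus $\exists \pi \in \systemvar$ maps to an existential propositional block conjoined with $\mathit{path}_\calT$, and $\forall \pi \in \systemvar$ to a universal block with $\mathit{path}_\calT \to (\cdot)$, using $(\exists r.\, R) \to \Phi \equiv \forall r.\, (R \to \Phi)$. Handling $\systemvar$ propositionally is essential, as it sidesteps the need to force $T$ to contain \emph{all} of $\traces(\calT)$, which satisfiability cannot enforce. The remaining first-order quantifiers $\quant \pi \in X$ become guarded \hyperqptl{} trace quantifiers over $T$, namely $\exists \pi.\ (x_X)_\pi \land (\cdot)$ and $\forall \pi.\ (x_X)_\pi \to (\cdot)$, where $(x_X)_\pi$ reads the marker at position $0$; since the body only compares atoms from $\ap$, the markers never interfere. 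Finally, each $\exists X$ is dropped from the prefix, its witness being the $x_X$-marked fragment of the existentially chosen $T$, and the quantifier-free body is translated by mapping each $a_\pi$ to $q_a$ or to the trace atom $a_\pi$, according to whether $\pi$ was bound to $\systemvar/\allvar$ or to some $X$.

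For correctness, the $(\Leftarrow)$ direction is direct: from any $T \models \varphi'$ one reads off the sets $A_X$ and checks, by induction over the prefix, that they witness the existential set quantifiers, while the propositional blocks range over exactly $\Sigma^\omega$ and $\traces(\calT)$. The $(\Rightarrow)$ direction, building $T$ from a witness for $\calT \models \varphi$, is where the main difficulty lies: because the grammar allows an $\exists X$ inside the scope of universal first-order quantifiers, the witness for $X$ is in general a Skolem function of the enclosing universally chosen traces, whereas the marker encoding bakes a single $A_X$ into $T$ independently of those traces, and swapping $\forall \pi.\, \exists X$ to $\exists X.\, \forall \pi$ is unsound. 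I would resolve this by \emph{indexing}: for an $\exists X$ nested under universal first-order variables $\pi_1, \dots, \pi_\ell$, represent the set-valued Skolem function as a relation, encoding each pair ``(values of $\pi_1,\dots,\pi_\ell$), element of $X$)'' as one model trace over $\ell{+}1$ tagged copies of $\ap$; the guarded quantifiers for $\quant \sigma \in X$ are then relativized to model traces whose first $\ell$ components agree (via $\G(\cdot =_\ap \cdot)$) with the current valuations of $\pi_1, \dots, \pi_\ell$ (whether these are propositional or genuine trace variables). The single existential model choice then picks all slices simultaneously, matching $\forall \bar\pi.\, \exists X$. Verifying that this indexing faithfully recovers the intended slice $A_{\bar\pi}$, and that the propositional representation of $\systemvar/\allvar$-traces interacts consistently with the trace representation of $X$-elements in the body, is the technical crux; the surrounding inductive argument is routine.
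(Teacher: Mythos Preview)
Your proposal is correct and converges on the paper's construction: encode each existential second-order variable $X$ as a relation indexed by the preceding first-order variables (the paper uses tagged propositions $[a,j]_X$ for exactly this purpose), drop the set quantifiers, and relativize first-order quantifiers over $X$ via membership guards that match the index components against the current trace valuations. The paper indexes over \emph{all} preceding first-order variables rather than only the universal ones, but your tighter Skolemization is equally valid. The one substantive difference is the treatment of $\systemvar$ and $\allvar$: the paper keeps those as guarded trace quantifiers over the model and adds side-formulas $\varphi_\systemvar, \varphi_\allvar$ that force the model to contain an encoding of every trace in $\traces(\calT)$ and of every trace in $\Sigma^\omega$ (which already needs propositional quantification). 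Your choice to handle these two fixed sets directly via propositional quantification, with a QPTL path predicate for $\calT$, sidesteps that completeness obligation on the model and gives a cleaner correctness argument, at the price of a heterogeneous translation in which some first-order quantifiers become propositional blocks and others become guarded trace quantifiers.
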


Lastly, we present some easy fragments of \sohyperltl{} for which the model-checking problem is decidable. 
Here we write $\exists^* X$ (resp.~$\forall^* X$) for some sequence of existentially (resp.~universally) quantified \emph{second-order} variables and $\exists^* \pi$ (resp.~$\forall^* \pi$) for some sequence of existentially (resp.~universally) quantified \emph{first-order} variables. 
For example, $\exists^* X \forall^*\pi$ captures all formulas of the form $\exists X_1, \ldots X_n. \forall \pi_1, \ldots, \pi_m. \psi$ where $\psi$ is quantifier-free. 

\begin{restatable}{proposition}{easyfrags}
\label{prop:easyfragments}
The model-checking problem of \sohyperltl{} is decidable for the fragments: $\exists^* X \forall^*\pi$, $\forall^* X \forall^*\pi$, $\exists^* X \exists^*\pi$, $\forall^* X \exists^*\pi$, $\exists X.\exists^* \pi\in X \forall^*\pi'\in X$.
\end{restatable}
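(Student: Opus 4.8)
The plan is to reduce, in each of the five cases, the model-checking question $\calT \models \varphi$ to a \hyperqptl{} model-checking instance over $\calT$, which is decidable~\cite{DBLP:phd/dnb/Rabe16,BeutnerF23}. Two observations make this work. First, first-order quantification over $\systemvar$ is exactly the native trace quantification of \hyperqptl{}. Second, first-order quantification over $\allvar = \Sigma^\omega$ is expressible in \hyperqptl{} by replacing, for each $a \in \ap$, the atom $a_\pi$ with a freshly quantified proposition; since propositional quantifiers range over all $\omega$-sequences of truth values, this captures exactly all traces over $\ap$ (this is the converse of the encoding used in the proof of~\Cref{prop:hyperltl_sat_in_sohyperltl_mc}). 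Hence it suffices to eliminate, in each fragment, the second-order quantifiers, leaving a first-order formula whose trace quantifiers range only over $\systemvar$ and $\allvar$. Throughout I use that second-order quantifiers range over \emph{arbitrary} subsets of $\Sigma^\omega$, so in particular over $\emptyset$ and over finite sets, and that the body $\psi$ is plain quantifier-free LTL with no reference to second-order variables.

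For the two \textbf{aligned} fragments $\forall^* X \forall^*\pi$ and $\exists^* X \exists^*\pi$, I would use the collapse identities $\forall X\ldot \forall \pi \in X\ldot \chi \equiv \forall \pi \in \allvar\ldot \chi$ and $\exists X\ldot \exists \pi \in X\ldot \chi \equiv \exists \pi \in \allvar\ldot \chi$, together with their block versions for several sets and several first-order variables. The nontrivial direction is shown by instantiating each quantified set to the finite set of traces one wants to plug in for the first-order variables ranging over it: given arbitrary target traces $u_1, \ldots, u_m$, set each second-order variable $X_i$ to $\{u_j \mid \pi_j \text{ ranges over } X_i\}$. After the collapse, every first-order variable that ranged over a quantified set now ranges over $\allvar$, while those ranging over $\systemvar$ are unchanged, so the residual formula is purely first-order and lies in \hyperqptl{}.

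For the two \textbf{anti-aligned} fragments I would exploit that $\emptyset$ is an admissible valuation. In $\exists^* X \forall^*\pi$, if some first-order variable ranges over an existentially quantified $X_i$, then assigning $\emptyset$ to all existential sets makes the corresponding quantifier $\forall \pi \in \emptyset$ vacuously true; as the whole first-order prefix is universal, the formula evaluates to $\true$, so $\calT \models \varphi$ holds unconditionally. Otherwise the existential sets are never used by a first-order quantifier, can be dropped, and the residual $\forall^*\pi$ formula over $\systemvar$/$\allvar$ is checked in \hyperqptl{}. Dually, in $\forall^* X \exists^*\pi$, if some first-order variable ranges over a universally quantified $X_i$, then the valuation assigning $\emptyset$ to that $X_i$ makes $\exists \pi \in \emptyset$ false; since the first-order prefix is existential this falsifies the body, and since the second-order prefix is universal, $\calT \not\models \varphi$. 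Otherwise the (unused) universal sets are dropped and the residual $\exists^*\pi$ formula is checked in \hyperqptl{}. Each branch is decidable.

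The main obstacle, and the most interesting case, is the last fragment $\exists X\ldot \exists \pi_1, \ldots, \pi_k \in X\ldot \forall \pi'_1, \ldots, \pi'_l \in X\ldot \psi$. The key insight is that once the existential witnesses are fixed, the set $X$ is forced to contain exactly those $k$ witnesses, and the universal block only becomes weaker as $X$ shrinks; hence the optimal choice is to take $X$ to be precisely the set of the $k$ witnesses. Over this finite set the universal quantifiers unroll into a finite conjunction, yielding the equivalence with the $\exists^*$ formula $\exists \pi_1, \ldots, \pi_k \in \allvar\ldot \bigwedge_{f : [l] \to [k]} \psi[\pi'_1 \mapsto \pi_{f(1)}, \ldots, \pi'_l \mapsto \pi_{f(l)}]$, where the substitution merely renames each universally bound trace variable to one of the existential ones. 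Both implications are routine: for $\Leftarrow$ take $X$ to be the set of witnesses, and for $\Rightarrow$ instantiate each $\pi'_p$ by the witness it selects. The resulting formula is a purely existential \hyperqptl{} formula over $\allvar$ with quantifier-free body, so model checking it against $\calT$ is decidable, completing all five cases.
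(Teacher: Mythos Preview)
Your proof is correct. For the two aligned fragments your collapse to $\allvar$ is exactly the paper's argument (``removing second-order quantification does not change the semantics''). The approaches diverge on the remaining three cases, and in each case your route is somewhat different from the paper's.

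For the anti-aligned fragments, the paper replaces each quantified set $X$ by a singleton $\{\pi_X\}$ with a freshly quantified trace variable $\pi_X$ (existential in the case $\exists^* X\,\forall^*\pi$, universal in the dual case), thus turning the formula into a pure first-order one. You instead instantiate the second-order variables with $\emptyset$ and read off the answer immediately. Your argument is more elementary and, incidentally, sidesteps the corner case the paper's sketch glosses over: the original $\exists X\ldot\forall\pi\in X\ldot\psi$ is true via $X=\emptyset$ even when $\psi$ is unsatisfiable, which the singleton reduction by itself does not reproduce (and dually for $\forall^* X\,\exists^*\pi$).

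For the last fragment $\exists X\ldot\exists^*\pi\in X\ldot\forall^*\pi'\in X$, the paper reduces to the \emph{satisfiability} problem of the $\exists^*\forall^*$ fragment of \hyperqptl{}: the model witnessing satisfiability plays the role of $X$, and decidability is then quoted from the literature. Your finite unrolling of the universal block over the $k$ existential witnesses instead produces a purely existential formula and stays within \hyperqptl{} \emph{model checking}. Your reduction is more self-contained (no separate satisfiability result needed) at the price of a $k^{l}$ blow-up of the body; the paper's avoids the blow-up but appeals to an external decidability theorem.
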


See \Cref{app:proofs} for the full proofs of the propositions above. 

\section{Expressiveness of \oldsohyperltl{}}\label{sec:examples}
In this section, we point to existing logics that can naturally be encoded within our second-order hyperlogics \sohyperltl{} and \fphyperltl{}.

\subsection{\oldsohyperltl{} and LTL$_{\calK, \sfC}$}\label{sec:ltlK}

LTL$_\calK$ extends LTL with the knowledge operator $\calK$.
For some subset of agents $A$, the formula $\calK_A \psi$ holds in timestep $i$, if $\psi$ holds on all traces equivalent to some agent in $A$ up to timestep $i$.
See \Cref{app:ltlk} for detailed semantics. 
LTL$_\calK$ and HyperCTL$^*$ have incomparable expressiveness \cite{DBLP:conf/fossacs/BozzelliMP15} but the knowledge operator $\calK$ can be encoded by either adding a linear past operator \cite{DBLP:conf/fossacs/BozzelliMP15} or by adding propositional quantification (as in \HyperQPTL{}) \cite{DBLP:phd/dnb/Rabe16}.

Using \fphyperltl{} we can encode LTL$_{\calK, \sfC}$, featuring the knowledge operator $\calK$ \emph{and} the common knowledge operator $\sfC$ (which requires that $\psi$ holds on the closure set of equivalent traces, up to the current timepoint) \cite{DBLP:conf/spin/HoekW02}.
Note that LTL$_{\calK, \sfC}$ is not encodable by only adding propositional quantification  or the linear past operator.

\begin{restatable}{proposition}{kltl}
    For every \emph{LTL$_{\calK, \sfC}$} formula $\varphi$ there exists an \fphyperltl{} formula $\varphi'$ such that for any system $\calT$ we have $\calT \models_{\text{LTL$_{\calK, \sfC}$}} \varphi$ iff  $\calT \models \varphi'$.
\end{restatable}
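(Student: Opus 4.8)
The plan is to define a compositional translation $T$ from LTL$_{\calK, \sfC}$ into \fphyperltl{} by induction on the formula structure, and to prove by the same induction that it preserves satisfaction at a designated current position. Since model checking asks whether $\varphi$ holds on all system traces at time $0$, the top-level formula will be $\forall \pi \in \systemvar.\, T(\varphi)$, and each epistemic operator will re-quantify over system traces. The Boolean and temporal cases of $T$ commute with the corresponding \fphyperltl{} connectives and operators applied to the current trace $\pi$, so the only interesting cases are $\calK_A$ and $\sfC_A$. To deal with the fact that these operators are position-relative, I would evaluate every subformula at a position pinned by a marker, using the future-only idiom $\F(m_{\pi_m} \wedge \cdot)$, where $m$ is a proposition that holds exactly once.

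For the knowledge operator I would reuse the known encoding of $\calK$ via propositional quantification (the \hyperqptl{} route referenced in the text). \fphyperltl{} provides propositional quantification for free through quantification over $\allvar$: a fresh trace $\pi_m$ carrying a single marked proposition $m$ acts as a propositional variable that pins the reference position $i$. Indistinguishability up to time $i$ by agent $a$ with observation set $O_a$ is then expressible with future operators as agreement on $O_a$ weakly until the marker, i.e. $(\pi =_{O_a} \pi') \W (m_{\pi_m} \wedge \pi =_{O_a} \pi')$. The case $\calK_A \psi$ becomes $\forall \pi' \in \systemvar.\, \big(\bigvee_{a \in A}\mathit{agree}_a(\pi,\pi')\big) \to \F(m_{\pi_m} \wedge T(\psi))$, so that $T(\psi)$ is evaluated on $\pi'$ exactly at the marked position.

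For common knowledge I would use precisely the fixpoint construction of the running example: $\sfC_A \psi$ translates to a smallest-set quantification $\big(X, \smallestSet,\ \pi \modin X \land (\forall \pi_1 \in X. \forall \pi_2 \in \systemvar.\ \mathit{indist}_A(\pi_1,\pi_2) \to \pi_2 \modin X)\big)$ followed by $\forall \pi' \in X.\, \F(m_{\pi_m} \wedge T(\psi))$, where $\mathit{indist}_A$ is the disjunction over $a \in A$ of the marker-relative indistinguishability above. The minimality constraint forces $X$ to be exactly the reflexive-transitive closure of $\bigcup_{a \in A} \sim_a^i$ starting from $\pi$, which is the common-knowledge set; since the closure condition is monotone and Horn-shaped, its least solution is unique, so the leading quantifier may be omitted. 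This is the step that genuinely requires the second-order fixpoint, consistent with the remark that LTL$_{\calK, \sfC}$ is encodable neither by propositional quantification nor by linear past alone.

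The main obstacle is the mismatch between the position-relative semantics of the epistemic operators and the prenex, synchronous, future-only semantics of \fphyperltl{}: all trace and set quantifiers must be lifted to the prefix, after which a freshly quantified trace is read synchronously from position $0$, whereas the operator spoke about some later position. Resolving this requires a fresh marker for each operator occurrence and careful threading of markers through nested operators, so that each $\calK_A$ and $\sfC_A$ references the correct position and each inner subformula is evaluated from it, together with a proof that the marker-relative indistinguishability coincides with $\sim_a^i$. Correctness then follows by induction, with the invariant that, under any assignment placing the relevant traces at the same marked position, $\Pi, \Delta \models T(\psi)$ holds iff the LTL$_{\calK, \sfC}$ judgment holds at that position; the $\sfC_A$ case additionally invokes uniqueness of the least fixpoint to identify $X$ with the common-knowledge set.
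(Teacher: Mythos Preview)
Your proposal is correct and follows essentially the same route as the paper: one fresh marker trace per epistemic operator occurrence (quantified from $\allvar$) to pin the evaluation position, marker-relative indistinguishability expressed with an until/weak-until up to the marker, and the running-example least-fixpoint set for $\sfC$. The paper's proof sketch differs only cosmetically: it treats $\calK_{A_i}$ and $\sfC$ uniformly by introducing a second-order set $Y_j$ for \emph{every} knowledge operator (for $\calK_{A_i}$ the closure is with respect to a single agent's equivalence, which is already transitive, so the set coincides with your first-order $\forall \pi' \in \systemvar$ encoding), whereas you reserve second-order quantification for $\sfC$ and use the \hyperqptl{}-style first-order encoding for $\calK$. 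Your explicit discussion of lifting quantifiers to prenex form and threading one marker per occurrence through nested operators is exactly the ``careful threading'' the paper alludes to but does not spell out; the inductive invariant you state matches the paper's induction from innermost to outermost operator.
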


\begin{proof}[Sketch]
    We follow the intuition discussed in~\Cref{sec:running:example}. For each occurrence of a knowledge operator in $\{\calK, \sfC \}$, we introduce a second-order set that collects all equivalent traces, and a fresh trace variable that keeps track on the points in time with respect to which we need to compare. We then inductively construct a \fphyperltl{} formula that captures all the knowledge and common-knowledge sets. For more details see~\Cref{app:ltlk}. \qed
\end{proof}

\subsection{\oldsohyperltl{} and Asynchronous Hyperproperties}\label{sec:asynchronous:hyperproperties}

Most existing hyperlogics (including \sohyperltl{}) traverse the traces of a system \emph{synchronously}.
However, in many cases such a synchronous traversal is too restricting and we need to compare traces asynchronously.
As an example, consider \emph{observational determinism} (OD), which we can express in HyperLTL as $\varphi_\mathit{OD} := \forall \pi_1. \forall \pi_2. \G(o_{\pi_1} \leftrightarrow o_{\pi_2})$.
The formula states that the output of a system is identical across all traces and so (trivially) no information about high-security inputs is leaked. 
In most systems encountered in practice, this synchronous formula is violated, as the exact timing between updates to $o$ might differ by a few steps (see \Cref{app:asynchronous:hyperproperties} for some examples). 
However, assuming that an attacker only has access to the memory footprint and not a timing channel, we would only like to check that all traces are \emph{stutter} equivalent (with respect to~$o$).

A range of extensions to existing hyperlogics has been proposed to reason about such asynchronous hyperproperties~\cite{BaumeisterCBFS21,BozzelliPS21,GutsfeldMO21,BeutnerF21,BeutnerF23LMCS}. We consider AHLTL~\cite{BaumeisterCBFS21}.
An AHLTL formula has the form $\quant_1 \pi_1, \ldots , \quant_n \pi_m. \mathbf{E} \ldot \psi$ where $\psi$ is a qunatifier-free HyperLTL formula.
The initial trace quantifier prefix is handled as in HyperLTL.
However, different from HyperLTL, a trace assignment $[\pi_1 \mapsto t_1, \ldots, \pi_n \mapsto t_n]$ satisfies $\mathbf{E} \ldot \psi$ if there exist stuttered traces $t_1', \ldots, t_n'$ of $t_1, \ldots, t_n$ such that $[\pi_1 \mapsto t_1', \ldots, \pi_n \mapsto t_n'] \models \psi$.
We write $\calT \models_{\mathit{AHLTL}} \varphi$ if a system $\calT$ satisfies the AHLTL formula $\varphi$.
Using this quantification over stutterings we can, for example, express an asynchronous version of observational determinism as $\forall \pi_1. \forall \pi_2. \mathbf{E} \ldot \G(o_{\pi_1} \leftrightarrow o_{\pi_2})$ stating that every two traces can be aligned such that they (globally) agree on $o$.
Despite the fact that \fphyperltl{} is itself synchronous, we can use second-order quantification to encode asynchronous hyperproperties, as we state in the following proposition. 

\begin{proposition}
    For any AHLTL formula $\varphi$ there exists a \fphyperltl{} formula $\varphi'$ such that for any system $\calT$ we have $\calT \models_{\mathit{AHLTL}} \varphi$ iff  $\calT \models \varphi'$.
\end{proposition}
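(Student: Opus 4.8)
The plan is to reduce the satisfaction of the asynchronous operator $\mathbf{E}$ to a second-order quantification over a set of \emph{stuttered} traces, and then verify a synchronous formula over that set. The key observation is that $\mathbf{E}\ldot\psi$ asks whether there \emph{exist} stutterings $t_1',\dots,t_n'$ of the chosen traces such that the synchronous $\psi$ holds on them. I would capture all possible stutterings of the system's traces in a second-order set, and then witness the alignment by an existential choice of traces from that set. Concretely, given an AHLTL formula $\varphi = \quant_1\pi_1 \cdots \quant_n\pi_n.\,\mathbf{E}\ldot\psi$, I would build $\varphi'$ whose outer trace quantifiers mirror the AHLTL prefix (quantifying over $\systemvar$), then introduce a second-order variable $S$ that is constrained to be (a fixpoint characterization of) the set of all stutterings of system traces, and finally re-quantify existentially over $S$ to pick the aligned versions $t_1',\dots,t_n'$ against which $\psi$ is checked synchronously.

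The first step is to encode ``$S$ contains exactly the stutterings of $\traces(\calT)$.'' Stuttering of a trace $t$ means replacing each letter $t(i)$ by a finite, nonempty block of copies of $t(i)$; a trace $t'$ is a stuttering of $t$ if collapsing maximal constant blocks of $t'$ yields $t$ (I need the standard convention that the final letter may be repeated infinitely, so finite-stuttering variants should be included appropriately, matching the AHLTL semantics of \cite{BaumeisterCBFS21}). I would express the stutter-closure as a smallest-set fixpoint: $S$ is the smallest set that contains every system trace and is closed under the operation of inserting one additional duplicated step at any position. The subtlety is that a single ``insert a duplicate'' step is a relation between a trace and its one-step stuttering, which is itself an asynchronous comparison; so I would instead characterize stutter-equivalence \emph{synchronously} via the standard trick of comparing two traces using an auxiliary proposition or a temporal encoding that matches the $i$-th block of one trace with the $i$-th block of the other. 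This lets me write, inside \fphyperltl{}, a smallest-set constraint $\varphi_1(S)$ forcing $S$ to be the stutter-closure of $\systemvar$.

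Once $S$ is available, the body is straightforward: I replace each $\mathbf{E}\ldot\psi$ by an existential first-order quantification $\exists \pi_1' \in S \cdots \exists \pi_n' \in S$ asserting that each $\pi_j'$ is a stuttering of the originally chosen $\pi_j$ (again using the synchronous stutter-equivalence encoding, applied pairwise via $=_P$-style comparisons over the relevant propositions), together with the synchronous formula $\psi$ evaluated on $\pi_1',\dots,\pi_n'$. Because $\psi$ is a quantifier-free HyperLTL formula and all the stutter-equivalence assertions sit outside the temporal operators where needed, the combined formula stays within the \fphyperltl{} grammar. Soundness and completeness then follow by unfolding the two semantics: $\calT \models_{\mathit{AHLTL}} \varphi$ iff for the quantified system traces there exist stutterings satisfying $\psi$, iff the corresponding stuttered traces lie in $S$ and witness the existential first-order quantifiers in $\varphi'$.

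\textbf{The main obstacle} I expect is the faithful \emph{synchronous} encoding of stutter-equivalence and, relatedly, proving that the fixpoint $\varphi_1$ genuinely has the stutter-closure of $\systemvar$ as its unique smallest solution. Stutter-equivalence of two $\omega$-words is not itself a synchronous (letter-indexed) relation, so I must encode the block-alignment using temporal operators (e.g., asserting that each position either repeats the previous letter or advances in lockstep with a matching marker), and then argue this encoding is exactly right on the relevant propositions. Care is also needed with the ``finitely many stutters at each position but possibly the tail repeated infinitely'' convention and with ensuring the smallest-set solution is unique so that the leading second-order quantifier can be dropped as in the common-knowledge example. Modulo this encoding lemma, the reduction and its correctness proof are routine structural inductions.
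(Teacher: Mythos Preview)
Your plan takes an avoidable detour that creates exactly the obstacle you then flag as unresolved. You propose a \emph{single} global set $S$ containing the stutter-closure of all system traces; after picking $\pi_j' \in S$ you must still assert that $\pi_j'$ is a stuttering of the \emph{particular} $\pi_j$ chosen earlier. That pairwise stutter-equivalence test is the genuinely asynchronous relation, and your sketch (``auxiliary proposition / block-matching'') is a QPTL-style encoding you do not carry out. So as stated the proof has a gap at precisely the point that matters.

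The paper avoids this entirely by using one second-order variable \emph{per trace variable}: for each $i$ it takes $X_i$ to be the smallest set containing $\pi_i$ and closed under inserting a single duplicated letter. Then $X_i$ is exactly the set of all stutterings of $\pi_i$, so the condition $\pi_i' \in X_i$ \emph{is} the ``$\pi_i'$ is a stuttering of $\pi_i$'' check---no separate stutter-equivalence formula is needed. The body is just $\exists \pi_1' \in X_1 \cdots \exists \pi_n' \in X_n.\,\psi[\pi_1'/\pi_1,\ldots,\pi_n'/\pi_n]$.

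You also misdiagnose the one-step-stutter relation: it is \emph{not} asynchronous. ``$\pi_2$ equals $\pi_1$ with one duplicate inserted'' is the synchronous LTL formula
\[
(\pi_1 =_{\ap} \pi_2)\ \U\ \Big(\G \bigwedge_{a\in\ap} a_{\pi_1} \leftrightarrow \X a_{\pi_2}\Big),
\]
i.e., the traces agree up to some point, after which $\pi_2$ is a one-step shift of $\pi_1$. With this observation the per-trace fixpoint constraint is immediate, its least solution is unique, and the whole translation becomes a direct construction rather than an encoding lemma about block alignments.
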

\begin{proof}
    Assume that $\varphi = \quant_1 \pi_1, \ldots, \quant_n \pi_n. \mathbf{E} \ldot \psi$ is the given AHLTL formula.
    For each $i \in [n]$ we define a formula $\varphi_i$ as follows
    \begin{align*}
     \forall \pi_1 \in X_i. \forall \pi_2 \in \allvar{}\ldot
    \Big( \big( \pi_1 =_\ap \pi_2 \big) \U \big( \G \bigwedge_{a \in \ap} a_{\pi_1} \leftrightarrow \X a_{\pi_2} \big)
    \Big) \rightarrow \pi_2 \modin X_i
\end{align*}
The formula asserts that the set of traces bound to $X_i$ is closed under stuttering, i.e., if we start from any trace in $X_i$ and stutter it once (at some arbitrary position) we again end up in $X_i$.
Using the formulas $\varphi_i$, we then construct a \fphyperltl{} formula that is equivalent to $\varphi$ as follows
\begin{align*}
    \varphi' := \,&\quant_1 \pi_1 \in \systemvar{}, \ldots, \quant_n \pi_n \in \systemvar{}. (X_1, \smallestSet, \pi_1 \modin X_1 \land \varphi_1) \cdots (X_n, \smallestSet, \pi_n \modin X_n \land \varphi_n) \\
    &\quad\quad\exists \pi_1' \in X_1, \ldots, \exists \pi_n' \in X_n. \psi[\pi_1'/\pi_1, \ldots, \pi_n'/\pi_n]
\end{align*}
We first mimic the quantification in $\varphi$ and, for each trace $\pi_i$, construct a least set $X_i$ that contains $\pi_i$ and is closed under stuttering (thus describing exactly the set of all stuttering of $\pi_i$).
Finally, we assert that there are traces $\pi_1', \ldots, \pi_n'$ with $\pi_i' \in X_i$ (so $\pi_i'$ is a stuttering of $\pi_i$) such that $\pi_1', \ldots, \pi_n'$ satisfy $\psi$.
It is easy to see that $\calT \models_{\mathit{AHLTL}} \varphi$ iff $\calT \models \varphi'$ holds for all systems.
\qed
\end{proof}

\fphyperltl{} captures all properties expressible in AHLTL.
In particular, our approximate model-checking algorithm for \fphyperltl{} (cf.~\Cref{sec:algorithms}) is applicable to AHLTL; even for instances where no approximate solutions were previously known.
In \Cref{sec:implementation}, we show that our prototype model checker for \fphyperltl{} can verify asynchronous properties in practice.

\section{Model-Checking \oldfphyperltl}\label{sec:algorithms}

In general, finite-state model checking of \fphyperltl{} is highly undecidable (cf.~\Cref{corr:sohyper_hardness}). 
In this section, we outline a partial algorithm that computes approximations on the concrete values of second-order variables for a fragment of \fphyperltl{}.
At a very high-level, our algorithm (\Cref{alg:mainVerification}) iteratively computes under- and overapproximations for second-order variables.
It then turns to resolve first-order quantification, using techniques from HyperLTL model checking~\cite{FinkbeinerRS15,BeutnerF23}, and resolves existential and universal trace quantification on the under- and overapproximation of the second-order variables, respectively.
If the verification fails, it goes back to refine second-order approximations. 

In this section, we focus on the setting where we are interested in the least sets (using $\smallestSet$), and use techniques to approximate the \emph{least} fixpoint. 
A similar (dual) treatment is possible for \fphyperltl{} formulas that use the largest set.
Every \fphyperltl{} which uses only minimal sets has the following form:
\begin{align}\label{eq:fpformula}
    \varphi = \gamma_1. (Y_1, \smallestSet, \varphi^\mathit{con}_1). \gamma_2 \ldots.  (Y_k, \smallestSet{}, \varphi^\mathit{con}_k)\ldot \gamma_{k+1}\ldot \psi
\end{align}
We quantify second-order variables $Y_1, \ldots, Y_k$, where, for each $j \in [k]$, $Y_j$ is the least set that satisfies $\varphi^\mathit{con}_j$. 
Finally, for each $j \in [k+1]$, 
\begin{align*}
    \gamma_j = \quant_{l_j+1} \pi_{l_j+1} \in X_{l_j+1} \ldots \quant_{l_{j+1}} \pi_{l_{j+1}} \in X_{l_{j+1}}
\end{align*}
is the block of first-order quantifiers that sits between the quantification of $Y_{j-1}$ and $Y_j$.
Here $X_{l_j+1}, \ldots, X_{l_{j+1}} \in \{\systemvar, \allvar, Y_1, \ldots, Y_{j-1}\}$ are second-order variables that are quantified before $\gamma_j$.
In particular, $\pi_1, \ldots, \pi_{l_j}$ are the first-order variables quantified before~$Y_j$.

\subsection{Fixpoints in \oldfphyperltl{}}

We consider a fragment of \fphyperltl{} which we call the \emph{least fixpoint fragment}.
Within this fragment, we restrict the formulas $\varphi^\mathit{con}_1, \ldots, \varphi^\mathit{con}_k$ such that $Y_1, \ldots, Y_k$ can be approximated as (least) fixpoints. 
Concretely, we say that $\varphi$ is in the \emph{least fixpoint fragment} of \fphyperltl{} if for all $j \in [k]$, $\varphi^\mathit{con}_j$ is a conjunction of formulas of the form 
\begin{align}\label{eq:fixpoint-condition}
    \forall \dot{\pi}_1 \in X_1. \ldots \forall \dot{\pi}_n \in X_n\ldot \psi_{\mathit{step}} \rightarrow \dot{\pi}_M \modin Y_j
\end{align}
where each $X_i \in \{\systemvar, \allvar, Y_1, \ldots, Y_j\}$, $\psi_{\mathit{step}}$ is quantifier-free formula over trace variables $\dot{\pi}_1, \ldots, \dot{\pi}_n, \pi_1, \ldots, \pi_{l_j}$, and $M \in [n]$.
Intuitively, \Cref{eq:fixpoint-condition} states a requirement on traces that should be included in $Y_j$. 
If we find traces $\dot{t}_1 \in X_1, \ldots, \dot{t}_n \in X_n$ that, together with the traces $t_1, \ldots, t_{l_j}$ quantified before $Y_j$, satisfy $\psi_{\mathit{step}}$, then $\dot{t}_M$ should be included in $Y_j$.

Together with the minimality constraint on $Y_j$ (stemming from the semantics of \fphyperltl{}), this effectively defines a (monotone) least fixpoint computation, as $\psi_{\mathit{step}}$ defines exactly the traces to be added to the set. 
This will allow us to use results from fixpoint theory to compute approximations for the sets $Y_j$.

Our least fixpoint fragment captures most properties of interest, in particular, common knowledge (\Cref{sec:running:example}) and asynchronous hyperproperties (\Cref{sec:asynchronous:hyperproperties}).
We observe that formulas of the above form ensure that the solution $Y_j$ is unique, i.e., for any trace assignment $\Pi$ to $\pi_1, \ldots, \pi_{l_j}$ and second-order assignment $\Delta$ to $\systemvar, \allvar, Y_1, \ldots, Y_{j-1}$, there is only one element in $\mathit{sol}(\Pi, \Delta, (Y_j, \smallestSet, \varphi_j^\mathit{con}))$.

\subsection{Functions as Automata}\label{sec:functions:as:automata}

In our (approximate) model-checking algorithm, we represent a concrete assignment to the second-order variables $Y_1, \ldots, Y_k$ using automata $\calB_{Y_1}, \ldots, \calB_{Y_k}$.
The concrete assignment of $Y_j$ can depend on traces assigned to $\pi_1, \ldots, \pi_{l_j}$, i.e., the first-order variables quantified before $Y_j$.
To capture these dependencies, we view each $Y_j$ not as a set of traces but as a function mapping traces of all preceding first-order variables to a set of traces. 
We represent such a function $f : (\Sigma^\omega)^{l_j} \to 2^{(\Sigma^\omega)}$ mapping the $l_j$ traces to a set of traces as an automaton $\calA$ over $\Sigma^{l_j+1}$.
For traces $t_1, \ldots, t_{l_j}$, the set $f(t_1, \ldots, t_{l_j})$ is represented in the automaton by the set $\{ t \in \Sigma^\omega \mid \mathit{zip}(t_1, \ldots, t_{l_j}, t) \in \calL(\calA) \}$.
For example, the function $f(t_1) := \{t_1\}$ can be defined by the automaton that accepts the zipping of a pair of traces exactly if both traces agree on all propositions. 
This representation of functions as automata allows us to maintain an assignment to $Y_j$ that is parametric in $\pi_1, \ldots, \pi_{l_j}$ and still allows first-order model checking on $Y_1, \ldots, Y_k$.

\subsection{Model Checking for First-Order Quantification}\label{sec:first:order:checking}

First, we focus on first-order quantification, and assume that we are given a concrete assignment for each second-order variable as fixed automata $\calB_{Y_1}, \ldots, \calB_{Y_k}$ (where $\calB_{Y_j}$ is an automaton over $\Sigma^{l_j+1}$). Our construction for resolving first-order quantification is based on HyperLTL model checking~\cite{FinkbeinerRS15}, but needs to work on sets of traces that, themselves, are based on traces quantified before (cf.~\Cref{sec:functions:as:automata}).
Recall that the first-order quantifier prefix is $\gamma_1 \cdots \gamma_{k+1} = \quant_{1} \pi_{1} \in X_{1} \cdots \quant_{l_{k+1}} \pi_{l_{k+1}} \in X_{l_{k+1}}$.
For each $1 \leq i \leq l_{k+1}$ we inductively construct an automaton $\calA_i$ over $\Sigma^{i-1}$ that summarizes all trace assignments to $\pi_1, \ldots, \pi_{i-1}$  that satisfy the subformula starting with the quantification of $\pi_i$.
That is, for all traces $t_1, \ldots, t_{i-1}$ we have 
\begin{align*}
    [\pi_1 \mapsto t_1, \ldots, \pi_{i-1} \mapsto t_{i-1}] \models \quant_{i} \pi_{i} \in X_{i} \cdots \quant_{l_{k+1}} \pi_{l_{k+1}} \in X_{l_{k+1}}\ldot \psi
\end{align*}
(under the fixed second-order assignment for $Y_1, \ldots, Y_k$ given by $\calB_{Y_1}, \ldots, \calB_{Y_k}$) if and only if $\mathit{zip}(t_1, \ldots, t_{i-1}) \in \calL(\calA_i)$.
In the context of HyperLTL model checking we say $\calA_i$ is \emph{equivalent} to $\quant_{i} \pi_{i} \in X_{i} \cdots \quant_{l_{k+1}} \pi_{l_{k+1}} \in X_{l_{k+1}}\ldot \psi$ \cite{FinkbeinerRS15,BeutnerF23}. 
In particular, $\calA_1$ is an automaton over singleton alphabet $\Sigma^0$.

We construct $\calA_1, \ldots, \calA_{l_{k+1}+1}$ inductively, starting with $\calA_{l_{k+1}+1}$. 
Initially, we construct $\calA_{l_{k+1}+1}$ (over $\Sigma^{l_{k+1}}$) using a standard LTL-to-NBA construction on the (quantifier-free) body $\psi$ (see \cite{FinkbeinerRS15} for details).
Now assume that we are given an (inductively constructed) automaton $\calA_{i+1}$ over $\Sigma^i$ and want to construct~$\calA_i$. 
We first consider the case where $\quant_i = \exists$, i.e., the $i$th trace quantification is existential.
Now $X_i$ (the set where $\pi_i$ is resolved on) either equals $\systemvar$, $\allvar$ or $Y_j$ for some $j \in [k]$.
In either case, we represent the current assignment to $X_i$ as an automaton $\calC$ over $\Sigma^{T + 1}$ for some $T < i$ that defines the model of $X_i$ based on traces $\pi_1, \ldots, \pi_T$:
In case $X_i = \systemvar$, we set $\calC$ to be the automaton over $\Sigma^{0 + 1}$ that accepts exactly the traces in the given system $\calT$; in case $X_i = \allvar$, we set $\calC$ to be the automaton over $\Sigma^{0 + 1}$ that accepts all traces; If $X_i = Y_j$ for some $j \in [k]$ we set $\calC$ to be $\calB_{Y_j}$ (which is an automaton over $\Sigma^{l_j + 1}$).\footnote{Note that in this case $l_j < i$: if trace $\pi_i$ is resolved on $Y_j$ (i.e,  $X_i = Y_j$), then $Y_j$ must be quantified \emph{before} $\pi_i$ so there are at most $i-1$ traces quantified before~$Y_j$. }
Given $\calC$, we can now modify the construction from \cite{FinkbeinerRS15}, to resolve first-order quantification:
The desired automaton~$\calA_i$ should accept the zipping of traces $t_1, \ldots, t_{i-1}$ if there exists a trace $t$ such that (1) $\mathit{zip}(t_1, \ldots, t_{i-1}, t) \in \calL(\calA_{i+1})$, \emph{and} (2) the trace $t$ is contained in the set of traces assigned to $X_i$ as given by $\calC$, i.e., $\mathit{zip}(t_1, \ldots, t_T, t) \in \calL(\calC)$.
The construction of this automaton is straightforward by taking a product of~$\calA_{i+1}$ and~$\calC$.
We denote this automaton with  \lstinline[style=default, language=custom-lang]|eProduct($\calA_{i+1}$,$\calC$)|.
In case $\quant_i = \forall$ we exploit the duality that $\forall \pi. \psi$ = $\neg \exists \pi. \neg \psi$, combining the above construction with automata complementation. We denote this universal product of $\calA_{i+1}$ and $\calC$ with \lstinline[style=default, language=custom-lang]|uProduct($\calA_{i+1}$,$\calC$)|.

The final automaton $\calA_1$ is an automaton over singleton alphabet $\Sigma^0$ that is equivalent to $\gamma_1 \cdots \gamma_{k+1}. \psi$, i.e., the entire first-order quantifier prefix.
Automaton $\calA_1$ thus satisfies $\calL(\calA_1) \neq \emptyset$ (which we can decide) iff the empty trace assignment satisfies the first-order formula $\gamma_1 \cdots \gamma_{k+1}\ldot \psi$, iff $\varphi$ (of \Cref{eq:fpformula}) holds within the fixed model for $Y_1, \ldots, Y_k$.
For a given fixed second-order assignment (given as automata $\calB_{Y_1}, \ldots, \calB_{Y_k}$), we can thus decide if the system satisfies the first-order part.

During the first-order model-checking phase, each quantifier alternations in the formula require complex automata complementation. 
For the first-order phase, we could also use cheaper approximate methods by, e.g., instantiating the existential trace using a strategy \cite{CoenenFST19,BeutnerF22CAV,BeutnerF22}.

\subsection{Bidirectional Model Checking}

So far, we have discussed the verification of the first-order quantifiers assuming we have a fixed model for all second-order variables $Y_1, \ldots, Y_k$.
In our actual model-checking algorithm, we instead maintain under- and overapproximations on each of the $Y_1, \ldots, Y_k$.

\begin{algorithm}[!t]
\caption{}\label{alg:mainVerification}
\begin{code}
verify($\varphi$, $\calT$) = 
@@let $\varphi$ = $\big[\gamma_j \; (Y_j, \smallestSet , \varphi_j^\mathit{con})\big]_{j=1}^k \; \gamma_{k+1}\ldot \psi$ where $\gamma_i$ = $\big[\quant_m \pi_m \in X_m \big]_{m=l_i+1}^{l_{i+1}} $
@@let N = 0
@@let $\calA_\calT$ = systemToNBA($\calT$)
@@repeat 
@@@@// Start outside-in traversal on second-order variables
@@@@let $\flat$ = $\big[\systemvar \mapsto (\calA_\calT,\calA_\calT), \allvar \mapsto (\calA_\top, \calA_\top)\big]$
@@@@for $j$ from $1$ to $k$ do 
@@@@@@$\calB_j^l$ := underApprox($(Y_j,\smallestSet, \varphi_j^\mathit{con})$,$\flat$,$N$)
@@@@@@$\calB_j^u$ := overApprox($(Y_j,\smallestSet, \varphi_j^\mathit{con})$,$\flat$,$N$)
@@@@@@$\flat(Y_j)$ := $(\calB_j^l,\calB_j^u)$
@@@@// Start inside-out traversal on first-order variables
@@@@let $\calA_{l_{k+1} + 1}$ = LTLtoNBA($\psi$)
@@@@for $i$ from $l_{k+1}$ to $1$ do 
@@@@@@let $(\calC^l, \calC^u)$ = $\flat(X_i)$
@@@@@@if $\quant_i = \exists$ then 
@@@@@@@@$\calA_i$ := eProduct($\calA_{i+1}$, $\calC^l$) 
@@@@@@else 
@@@@@@@@$\calA_i$ := uProduct($\calA_{i+1}$, $\calC^u$)
@@@@if $\calL(\calA_1) \neq \emptyset$ then 
@@@@@@return SAT 
@@@else 
@@@@@@$N$ = $N$ + 1
\end{code}
\end{algorithm}

In each iteration, we first traverse the second-order quantifiers in an \emph{outside-in} direction and compute lower- and upper-bounds on each $Y_j$.
Given the bounds, we then traverse the first-order prefix in an \emph{inside-out} direction using the current approximations to $Y_1, \ldots, Y_k$.
If the current approximations are not precise enough to witness the satisfaction (or violation) of a property, we repeat and try to compute better bounds on $Y_1, \ldots, Y_k$. 
Due to the different directions of traversal, we refer to our model-checking approach as \emph{bidirectional}.
\Cref{alg:mainVerification} provides an overview.
Initially, we convert the system $\calT$ to an NBA $\calA_\calT$ accepting exactly the traces of the system.
In each round, we compute under- and overapproximations for each $Y_j$ in a mapping~$\flat$.
We initialize $\flat$ by mapping $\systemvar$ to $(\calA_\calT, \calA_\calT)$ (i.e., the value assigned to the system variable is precisely~$\calA_\calT$ for both under- and overapproximation), and $\allvar$ to $(\calA_\top, \calA_\top)$ where $\calA_\top$ is an automaton over $\Sigma^1$ accepting all traces. 
We then traverse the second-order quantifiers outside-in (from $Y_1$ to $Y_k$) and for each $Y_j$ compute a pair $(\calB_j^l, \calB_j^u)$ of automata over $\Sigma^{l_j + 1}$ that under- and overapproximate the actual (unique) model of $Y_j$.  
We compute these approximations using functions \lstinline[style=default, language=custom-lang]|underApprox| and \lstinline[style=default, language=custom-lang]|overApprox|, which can be instantiated with any procedure that computes sound lower and upper bounds (see \Cref{sec:computing:approximations}). 
During verification, we further maintain a precision bound~$N$ (initially set to $0$) that tracks the current precision of the second-order approximations.

When $\flat$ contains an under- and overapproximation for each second-order variable, we traverse the first-order variables in an inside-out direction (from~$\pi_{l_{k+1}}$ to~$\pi_1$) and, following the construction outlined in \Cref{sec:first:order:checking}, construct automata $\calA_{l_k+1}, \ldots, \calA_1$.
Different from the simplified setting in \Cref{sec:first:order:checking} (where we assume a fixed automaton $\calB_{Y_j}$ providing a model for each $Y_j$), the mapping $\flat$ contains only approximations of the concrete solution. We choose which approximation to use according to the corresponding set quantification:
In case we construct $\calA_i$ and $\quant_i = \exists$, we use the \emph{underapproximation} (thus making sure that any witness trace we pick is indeed contained in the actual model of the second-order variable); and if $\quant_i = \forall$, we use the \emph{overapproximation} (making sure that we consider at least those traces that are in the actual solution). 
If $\calL(\calA_1)$ is non-empty, i.e., accepts the empty trace assignment, the formula holds (assuming the approximations returned by  \lstinline[style=default, language=custom-lang]|underApprox| and \lstinline[style=default, language=custom-lang]|overApprox| are sound). 
If not, we increase the precision bound $N$ and repeat.

In \Cref{alg:mainVerification}, we only check for the satisfaction of a formula (to keep the notation succinct). 
Using the second-order approximations in $\flat$ we can also check the negation of a formula (by considering the negated body and dualizing all trace quantifiers). 
Our tool (\Cref{sec:implementation}) makes use of this and thus simultaneously tries to show satisfaction and violation of a formula.

\subsection{Computing Under- and Overapproximations}\label{sec:computing:approximations}

In this section we provide concrete instantiations for \lstinline[style=default, language=custom-lang]|underApprox| and \lstinline[style=default, language=custom-lang]|overApprox|.

\subsubsection*{Computing Underapproximations.} \label{sec:computing:under:approximations}
As we consider the fixpoint fragment, each formula $\varphi_j^\mathit{con}$ (defining $Y_j$) is a conjunction of formulas of the form in \Cref{eq:fixpoint-condition}, thus defining $Y_j$ via a least fixpoint computation.
For simplicity, we assume that $Y_j$ is defined by the single conjunct, given by~\Cref{eq:fixpoint-condition} (our construction generalizes easily to a conjunction of such formulas). 
Assuming fixed models for $\systemvar$, $\allvar$ and $Y_1, \ldots, Y_{j-1}$, the fixpoint operation defining $Y_j$ is monotone, i.e., the larger the current model for $Y_j$ is, the more traces we need to add according to \Cref{eq:fixpoint-condition}.
Monotonicity allows us to apply the Knaster–Tarski theorem \cite{tarski1955lattice} and compute underapproximations to the fixpoint by iteration.

In our construction of an approximation for $Y_j$, we are given a mapping $\flat$ that fixes a pair of automata for $\systemvar$, $\allvar$, and $Y_1, \ldots, Y_{j-1}$ (due to the outside-in traversal in \Cref{alg:mainVerification}). 
As we are computing an underapproximation, we use the underapproximation for each of the second-order variables in $\flat$.
So $\flat(\systemvar)$ and $\flat(\allvar)$ are automata over $\Sigma^1$ and for each $j' \in [j-1]$, $\flat(Y_{j'})$ is an automaton over $\Sigma^{l_{j'} + 1}$.
Given this fixed mapping $\flat$, we iteratively construct automata $\hat{\calC}_0, \hat{\calC}_1, \ldots$ over $\Sigma^{l_j + 1}$ that capture (increasingly precise) underapproximations on the solution for $Y_j$.
We set~$\hat{\calC}_0$ to be the automaton with the empty language.
We then recursively define $\hat{\calC}_{N+1}$ based on $\hat{\calC}_{N}$ as follows:
For each second-order variable $X_i$ for $i \in [n]$ used in \Cref{eq:fixpoint-condition} we can assume a concrete assignment in the form of an automaton $\calD_i$ over $\Sigma^{T_i + 1}$ for some $T_i \leq l_j$:
In case $X_i \neq Y_j$ (so $X_i \in \{\systemvar, \allvar, Y_1, \ldots, Y_{j-1}\}$), we set $\calD_i := \flat(X_i)$.
In case $X_i = Y_j$, we set $\calD_i := \hat{\calC}_N$, i.e., we use the current approximation of $Y_j$ in iteration $N$.
After we have set $\calD_1, \ldots, \calD_n$, we compute an automaton $\dot{\calC}$ over $\Sigma^{l_j+1}$ that accepts $\mathit{zip}(t_1, \ldots, t_{l_j}, t)$
iff there exists traces $\dot{t}_1, \ldots, \dot{t}_n$ such that (1) $\mathit{zip}(t_1, \ldots, t_{T_i}, \dot{t}_i) \in \calL(\calD_i)$ for all $i \in [n]$, (2) $[\pi_1 \mapsto t_1, \ldots, \pi_{l_j} \mapsto t_{l_j}, \dot\pi_1 \mapsto \dot{t}_{1},\ldots,  \dot\pi_n \mapsto \dot{t}_n] \models \psi_\mathit{step}$, and (3) trace $t$ equals $\dot{t}_M$ (of~\Cref{eq:fixpoint-condition}).
The intuition is that $\dot{\calC}$ captures all traces that should be added to $Y_j$:
Given $t_1, \ldots, t_{l_j}$ we check if there are traces $\dot{t}_1, \ldots, \dot{t}_n$ for trace variables $\dot{\pi}_1, \ldots, \dot{\pi}_n$ in \Cref{eq:fixpoint-condition} where (1) each $\dot{t}_i$ is in the assignment for $X_i$, which is captured by the automaton $\calD_i$ over $\Sigma^{T_i + 1}$, and (2) the traces $\dot{t}_1, \ldots, \dot{t}_n$ satisfy $\varphi_\mathit{step}$. If this is the case, we want to add $\dot{t}_M$ (as stated in \Cref{eq:fixpoint-condition}). 
We then define $\hat{\calC}_{N+1}$ as the union of $\hat{\calC}_{N}$ and $\dot{\calC}$, i.e. extend the previous model with all (potentially new) traces that need to be added.

\subsubsection*{Computing Overapproximations.}\label{sec:computing:over:approximations}
As we noted above, conditions of the form of \Cref{eq:fixpoint-condition} always define fixpoint constraints. 
To compute upper bounds on such fixpoint constructions we make use of Park's theorem, \cite{Winskel93} stating that if we find some set (or automaton) $\calB$ that is inductive (i.e., when computing all traces that we would need to add assuming the current model of $Y_j$ is $\calB$, we end up with traces that are already in $\calB$), then $\calB$ overapproximates the unique solution (aka.~least fixpoint) of $Y_j$.
To derive such an inductive invariant, we employ techniques developed in the context of regular model checking \cite{BouajjaniJNT00} (see \Cref{sec:relatedWork}). 
Concretely, we employ the approach from \cite{ChenHLR17} that uses automata learning \cite{DBLP:journals/iandc/Angluin87} to find suitable invariants. 
While the approach from \cite{ChenHLR17} is limited to finite words, we extend it to an $\omega$-setting by interpreting an automaton accepting finite words as one that accepts an $\omega$-word $u$ iff every prefix of $u$ is accepted.\footnote{This effectively poses the assumption that the step formula specifies a safety property, which seems to be the case for almost all examples.
As an example, common knowledge infers a safety property:
In each step, we add all traces for which there exists some trace that agrees on all propositions observed by that agent.}
As soon as the learner provides a candidate for an equivalence check, we check that it is inductive and, if not, provide some finite counterexample (see \cite{ChenHLR17} for details).
If the automaton is inductive, we return it as a potential overapproximation. 
Should this approximation not be precise enough, the first-order model checking (\Cref{sec:first:order:checking}) returns some concrete counterexample, i.e., some trace contained in the invariant but violating the property, which we use to provide more counterexamples to the learner.

\section{Implementation and Experiments}\label{sec:implementation}

We have implemented our model-checking algorithm in a prototype tool we call \texttt{HySO} (\textbf{Hy}perproperties with \textbf{S}econd \textbf{O}rder).\footnote{Our tool is publicly available at \url{https://zenodo.org/record/7878311}.}
Our tool uses \texttt{spot} \cite{DBLP:conf/cav/Duret-LutzRCRAS22} for basic automata operations (such as LTL-to-NBA translations and complementations). 
To compute under- and overapproximations, we use the techniques described in \Cref{sec:computing:approximations}.
We evaluate the algorithm on the following benchmarks.

\subsubsection*{Muddy Children.}
The muddy children puzzle \cite{DBLP:books/mit/FHMV1995} is one of the classic examples in common knowledge literature.
The puzzle consists of $n$ children standing such that each child can see all other children's faces.
From the $n$ children, an unknown number $k \geq 1$ have a muddy forehead, and in incremental rounds, the children should step forward if they know if their face is muddy or not.
Consider the scenario of $n=2$ and $k = 1$, so child $a$ sees that child $b$ has a muddy forehead and child $b$ sees that $a$ is clean.
In this case, $b$ immediately steps forward, as it knows that its forehead is muddy since 
$k\geq 1$.
In the next step, $a$ knows that its face is clean since $b$ stepped forward in round 1.
In general, one can prove that all children step forward in round $k$, deriving common knowledge. 

For each $n$ we construct a transition system $\calT_n$ that encodes the muddy children scenario with $n$ children.
For every $m$ we design a \fphyperltl{} formula $\varphi_m$ that adds to the common knowledge set $X$ all traces that appear indistinguishable in the first $m$ steps for some child.
We then specify that all traces in $X$ should agree on all inputs, asserting that all inputs are common knowledge.\footnote{This property is not expressible in non-hyper logics such as LTL$_{\calK, \sfC}$, where we can only check \emph{trace properties} on the common knowledge set $X$. In contrast, \fphyperltl{} allows us to check \emph{hyperproperties} on $X$. That way, we can express that some value is common knowledge (i.e., equal across all traces in the set) and not only that a property is common knowledge (i.e., holds on all traces in the set).}
We used \texttt{HySO} to \emph{fully automatically} check $\calT_n$ against $\varphi_m$ for varying values of~$n$ and $m$, i.e., we checked if, after the first $m$ steps, the inputs of all children are common knowledge. 
As expected, the above property holds only if $m \geq n$ (in the worst case, where all children are dirty $(k = n)$, the inputs of all children only become common knowledge after $n$ steps). 
We depict the results in \Cref{tab:muddy}.

\begin{table}[!t]
\begin{subtable}[b]{0.45\textwidth}
    \begin{center}
    \def\arraystretch{1.3}
    \setlength\tabcolsep{2mm}
    \begin{tabular}{cccccc}
        & & \multicolumn{4}{c}{\textbf{m}} \\
        \cline{3-6}
        & & \multicolumn{1}{|c|}{1} & \multicolumn{1}{|c|}{2} & \multicolumn{1}{|c|}{3} & \multicolumn{1}{|c|}{4} \\
        \cline{2-6}
        \multirow{6}{*}{\textbf{n}} & \multicolumn{1}{|c|}{2} & \multicolumn{1}{|c|}{\thead{\xmark{}\\0.64}} & \multicolumn{1}{|c|}{\thead{\cmark{}\\ 0.59}} &\multicolumn{1}{|c|}{} &  \multicolumn{1}{|c|}{}\\
        \cline{2-6}
        & \multicolumn{1}{|c|}{3} & \multicolumn{1}{|c|}{\thead{\xmark{}\\ 0.79}} & \multicolumn{1}{|c|}{\thead{\xmark{}\\ 0.75}} &\multicolumn{1}{|c|}{\thead{\cmark{}\\0.54}} &  \multicolumn{1}{|c|}{}\\
        \cline{2-6}
        & \multicolumn{1}{|c|}{4} & \multicolumn{1}{|c|}{\thead{\xmark{}\\2.72}} & \multicolumn{1}{|c|}{\thead{\xmark{}\\ 2.21}} &\multicolumn{1}{|c|}{\thead{\xmark{}\\ 1.67}} &  \multicolumn{1}{|c|}{\thead{\cmark{}\\1.19}}\\
        \cline{2-6}
    \end{tabular}
    \end{center}
    \subcaption{}\label{tab:muddy}

\end{subtable}%
\hfil
\begin{subtable}[b]{0.55\textwidth}
    \begin{center}
    \def\arraystretch{1.5}
    \setlength\tabcolsep{2mm}
    \begin{tabular}{llll}
        \toprule
        \textbf{Instance} & \textbf{Method} & \textbf{Res} & $\boldsymbol{t}$  \\
        \midrule
        $\calT_\mathit{syn}$, $\varphi_\mathit{OD}$ & - & \cmark{} & 0.26\\
        $\calT_\mathit{asyn}$, $\varphi_\mathit{OD}$ & - & \xmark{} & 0.31\\
        $\calT_\mathit{syn}$, $\varphi^\mathit{asyn}_\mathit{OD}$ & Iter (0) & \cmark{} & 0.50\\
        $\calT_\mathit{syn}$, $\varphi^\mathit{asyn}_\mathit{OD}$ & Iter (1) & \cmark{} & 0.78\\
        \textsc{Q1}, $\varphi_\mathit{OD}$ & - & \xmark{} & 0.34\\
        \textsc{Q1}, $\varphi^\mathit{asyn}_\mathit{OD}$ & Iter (1) & \cmark{} & 0.86\\
        \bottomrule  
    \end{tabular}
    \end{center}
    \subcaption{
    }\label{tab:more}
\end{subtable}%

\caption{
In \Cref{tab:muddy}, we check common knowledge in the muddy children puzzle for $n$ children and $m$ rounds. We give the result (\cmark{} if common knowledge holds and \xmark{} if it does not), and the running time.
In \Cref{tab:more}, we check synchronous and asynchronous versions of observational determinism. We depict the number of iterations needed and running time.
Times are given in seconds.
}
\end{table}

\subsubsection*{Asynchronous Hyperproperties.}
As we have shown in \Cref{sec:asynchronous:hyperproperties}, we can encode arbitrary AHLTL properties into \fphyperltl{}.
We verified synchronous and asynchronous version of observational determinism (cf.~\Cref{sec:asynchronous:hyperproperties}) on programs taken from \cite{BaumeisterCBFS21,BeutnerF21,BeutnerF23LMCS}.
We depict the verification results in \Cref{tab:more}.
Recall that \fphyperltl{} properties without any second-order variables correspond to HyperQPTL formulas.
\texttt{HySO} can check such properties precisely, i.e., it constitutes a sound-and-complete model checker for HyperQPTL properties with an arbitrary quantifier prefix.
The synchronous version of observational determinism is a HyperLTL property and thus needs no second-order approximation (we set the method column to ``-'' in these cases).

\subsubsection*{Common Knowledge in Multi-agent Systems.}
We used \texttt{HySO} for an automatic analysis of the system in \Cref{fig:common:knowledge}. 
Here, we verify that on initial trace $\{a\}^n\{d\}^\omega$ it is CK that $a$ holds in the first step.
We use a similar formula as the one of \Cref{sec:running:example}, with the change that we are interested in whether $a$ is CK (whereas we used $\LTLnext a$ in \Cref{sec:running:example}).
As expected, \texttt{HySO} requires $2n-1$ iterations to converge. 
We depict the results in \Cref{tab:running}.

\begin{table}[!t]
\begin{subtable}[b]{0.4\textwidth}
    \begin{center}
        \def\arraystretch{1.5}
        \setlength\tabcolsep{1.7mm}
        \begin{tabular}{llll}
            \toprule
            $\boldsymbol{n}$ & \textbf{Method} & \textbf{Res} & $\boldsymbol{t}$  \\
            \midrule
            1 & Iter (1) & \cmark{} & 0.51\\
            2 & Iter (3) & \cmark{}& 0.83\\
            3 & Iter (5) & \cmark{}& 1.20\\
            10 & Iter (19) & \cmark{}& 3.81\\
            100 & Iter (199) & \cmark& 102.8\\
            \bottomrule
        \end{tabular}
    \end{center}
    \subcaption{
    }\label{tab:running}

\end{subtable}%
\hfill
\begin{subtable}[b]{0.6\textwidth}
\begin{center}
    \def\arraystretch{1.5}
    \setlength\tabcolsep{1.7mm}
    \begin{tabular}{llll}
        \toprule
        \textbf{Instance} & \textbf{Method} & \textbf{Res} & $\boldsymbol{t}$  \\
        \midrule
        \textsc{SwapA} & Learn & \cmark{} & 1.07\\
        \textsc{SwapATwice} & Learn & \cmark{} & 2.13\\
        \textsc{SwapA}$_5$ & Iter (5) & \cmark{} & 1.15\\
        \textsc{SwapA}$_{15}$ & Iter (15) & \cmark{} & 3.04\\
        \textsc{SwapAViolation}$_5$ & Iter (5) & \xmark{} & 2.35\\
        \textsc{SwapAViolation}$_{15}$ & Iter (15) & \xmark{} & 4.21\\
        \bottomrule  
    \end{tabular}
\end{center}
\subcaption{}\label{tab:mar}
\end{subtable}%

\caption{
In \Cref{tab:muddy}, we check common knowledge in the example from \Cref{fig:common:knowledge} when starting with $a^nd^\omega$ for varying values of $n$. We depict the number of refinement iterations, the result, and the running time. 
In \Cref{tab:mar}, we verify various properties on Mazurkiewicz traces. We depict whether the property could be verified or refuted by iteration or automata learning, the result, and the time.
Times are given in seconds.}
\end{table}

\subsubsection*{Mazurkiewicz Traces.}
Mazurkiewicz traces are an important concept in the theory of distributed computing \cite{DR1995}.
Let $I \subseteq \Sigma \times \Sigma$ be an independence relation that determines when two consecutive letters can be switched (think of two actions in disjoint processes in a distributed system). 
Any $t \in \Sigma^\omega$ then defines the set of all traces that are equivalent to $t$ by flipping consecutive independent actions an arbitrary number of times (the equivalence class of all these traces is called the Mazurkiewicz Trace). 
See \cite{DR1995} for details. 
The verification problem for Mazurkiewicz traces now asks if, given some $t \in \Sigma^\omega$, all traces in the Mazurkiewicz trace of $t$ satisfy some property $\psi$.
Using \fphyperltl{} we can directly reason about the Mazurkiewicz Trace of any given trace, by requiring that all traces that are equal up to one swap of independent letters are also in a given set (which is easily expressed in \fphyperltl{}).

Using \texttt{HySO} we verify a selection of such trace properties that often require non-trivial reasoning by coming up with a suitable invariant. 
We depict the results in \Cref{tab:mar}. 
In our preliminary experiments, we model a situation where we start with $\{a\}^1\{\}^\omega$ and can swap letters $\{a\}$ and $\{\}$.
We then, e.g., ask if on any trace in the resulting Mazurkiewicz trace, $a$ holds at most once, which requires inductive invariants and cannot be established by iteration.

\section{Related Work}\label{sec:relatedWork}

In recent years, many logics for the formal specification of hyperproperties have been developed, extending temporal logics with explicit path quantification (examples include HyperLTL, HyperCTL$^*$ \cite{ClarksonFKMRS14}, HyperQPTL \cite{DBLP:phd/dnb/Rabe16,BeutnerF23LPAR}, HyperPDL \cite{GutsfeldMO20}, and HyperATL$^*$ \cite{BeutnerF21,BeutnerF23LMCS}); extending first and second-order logics with an equal level predicate \cite{CoenenFST19,Finkbeiner017}; or extending ($\omega$)-regular hyperproperties~\cite{DBLP:conf/vmcai/GoudsmidGS21,DBLP:conf/lata/BonakdarpourS21} to context-free hyperproperties~\cite{DBLP:journals/corr/abs-2209-10306}. 
\sohyperltl{} is the first temporal~logic that reasons about second-order hyperproperties which allows is to capture many existing (epistemic, asynchronous, etc.) hyperlogics while at the same time taking advantage of model-checking solutions that have been proven successful in first-order~settings.

\paragraph{Asynchronous Hyperproperties.}
For asynchronous hyperproperties, Gutfeld et al.~\cite{GutsfeldMO21} present an asynchronous extension of the polyadic $\mu$-calculus. 
Bozelli et al.~\cite{BozzelliPS21} extend HyperLTL with temporal operators that are only evaluated if the truth value of some temporal formula changes. 
Baumeister et al. present AHLTL~\cite{BaumeisterCBFS21}, that extends HyperLTL with a explicit quantification over trajectories and can be directly encoded within \fphyperltl{}.

\paragraph{Regular Model Checking.}
Regular model checking \cite{BouajjaniJNT00} is a general verification method for (possibly infinite state) systems, in which each state of the system is interpreted as a finite word.
The transitions of the system are given as a finite-state (regular) transducer, and the model checking problem asks if, from some initial set of states (given as a regular language), some bad state is eventually reachable. 
Many methods for automated regular model checking have been developed \cite{BoigelotLW03,DamsLS01,BoigelotLW04,ChenHLR17}.
\sohyperltl{} can be seen as a logical foundation for $\omega$-regular model checking:
Assume the set of initial states is given as a QPTL formula $\varphi_\mathit{init}$, the set of bad states is given as a QPTL formula $\varphi_\mathit{bad}$, and the transition relation is given as a QPTL formula  $\varphi_\mathit{step}$ over trace variables $\pi$ and $\pi'$.
The set of bad states is reachable from a trace (state) in $\varphi_\mathit{init}$ iff the following \fphyperltl{} formula holds on the system that generates all traces:
\begin{align*}
&\big(X, \smallestSet, \forall \pi \in \systemvar\ldot \varphi_\mathit{init}(\pi) \rightarrow \pi \modin X \land\\
&\quad\quad\quad\forall \pi \in X. \forall \pi' \in \systemvar\ldot \varphi_\mathit{step}(\pi, \pi') \rightarrow \pi' \modin X\big)\ldot \forall \pi \in X. \neg \varphi_\mathit{bad}(\pi)
\end{align*}

Conversely, \fphyperltl{} can express more complex properties, beyond the reachability checks possible in the framework of ($\omega$-)regular model checking.

\paragraph{Model Checking Knowledge.}
Model checking of knowledge properties in multi-agent systems was developed in the tools \texttt{MCK}~\cite{DBLP:conf/cav/GammieM04} and \texttt{MCMAS}~\cite{DBLP:journals/sttt/LomuscioQR17}, which can exactly express LTL$_{\calK}$.
Bozzelli et al.~\cite{DBLP:conf/fossacs/BozzelliMP15} have shown that HyperCTL$^*$ and LTL$_\calK$ have incomparable expressiveness, and present HyperCTL$^*_{lp}$~ -- an extension of HyperCTL$^*$ that can reason about past -- to unify  HyperCTL$^*$ and LTL$_\calK$.
While HyperCTL$^*_{lp}$ can express the knowledge operator, it cannot capture common knowledge. LTL$_{\calK,\sfC}$~\cite{DBLP:conf/spin/HoekW02} captures both knowledge and common knowledge, but the suggested model-checking algorithm only handles a decidable fragment that is reducible to LTL model checking. 
\section{Conclusion}

Hyperproperties play an increasingly important role in many areas of computer science. There is a strong need for specification languages and verification methods that reason about hyperproperties in a uniform and general manner, similar to what is standard for more traditional notions of safety and reliability.
In this paper, we have ventured forward from the first-order reasoning of logics like HyperLTL into the realm of second-order hyperproperties, i.e., properties that not only compare individual traces but reason comprehensively about \emph{sets} of such traces. 
With \sohyperltl{}, we have introduced a natural specification language and a general model-checking approach for second-order hyperproperties. \sohyperltl{} provides a general framework for a wide range of relevant hyperproperties, including common knowledge and asynchronous hyperproperties, which could previously only be studied with specialized logics and algorithms. \sohyperltl{} also provides a starting point for future work on second-order hyperproperties in areas such as cyber-physical~\cite{10.1145/3127041.3127058} and probabilistic systems~\cite{DimitrovaFT20}.

\subsubsection*{Acknowledgements.} 
We thank Jana Hofmann for the fruitful discussions.  
This work was supported by the European Research Council (ERC) Grant HYPER (No. 101055412), by DFG grant 389792660 as part of TRR~248, and by the German Israeli Foundation (GIF) Grant No. I-1513-407.2019.

\bibliographystyle{splncs04}
\bibliography{bib}
\appendix

\section{Additional Material for \Cref{sec:second:order:hyperltl}}

\subsection{Additional Material for \Cref{sec:sohyperltl}}\label{app:qptl}
\hyperltlIntoSoHyper*
\begin{proof}
HyperQPTL extends HyperLTL with quantification over atomic propositions. However, HyperQPTL can also be expressed using quantification over arbitrary traces, that are not necessarily system traces, capturing exactly the same semantics. Then, we can translate every HyperLTL trace quantification $\quant \pi. \varphi$ to $\quant \pi \in \systemvar. \varphi$, and every HyperQPTL trace quantification $\quant \tau. \varphi$ to $\quant \tau \in \allvar. \varphi$, to obtain a \sohyperltl{} formula. \qed
\end{proof}

\subsection{Additional Material for \Cref{sec:fphyperltl}}\label{app:fptoso}

\fpToSo*
\begin{proof}
    We translate the \fphyperltl{} formula $\varphi$ into a \sohyperltl{} formula $\llbracket \varphi \rrbracket$:
    \begin{align*}
        \llbracket \psi \rrbracket &:= \psi \\
        \llbracket \quant \pi \in X\ldot \varphi \rrbracket &:= \quant \pi \in X\ldot \llbracket \varphi \rrbracket \\
        \llbracket \exists  (X, \smallestSet, \varphi_1)\ldot \varphi_2 \rrbracket &:= \exists X\ldot \llbracket \varphi_1 \rrbracket \land \big(\forall Y\ldot Y \subsetneq X \Rightarrow \neg \llbracket\varphi_1[Y / X] \rrbracket\big) \land \llbracket \varphi_2 \rrbracket\\
        \llbracket \forall  (X, \smallestSet, \varphi_1)\ldot \varphi_2 \rrbracket &:= \forall X\ldot \Big( \llbracket \varphi_1 \rrbracket \land \big(\forall Y\ldot Y \subsetneq X \Rightarrow \neg \llbracket\varphi_1[Y / X] \rrbracket\big)\Big) \Rightarrow \llbracket \varphi_2 \rrbracket\\
        \llbracket \exists  (X, \largestSet, \varphi_1)\ldot \varphi_2 \rrbracket &:= \exists X\ldot \llbracket \varphi_1 \rrbracket \land \big(\forall Y\ldot Y \supsetneq X\Rightarrow \neg \llbracket\varphi_1[Y / X] \rrbracket\big) \land \llbracket \varphi_2 \rrbracket\\
        \llbracket \forall  (X, \largestSet, \varphi_1)\ldot \varphi_2 \rrbracket &:= \forall X\ldot \Big(\llbracket \varphi_1 \rrbracket \land \big(\forall Y\ldot Y \supsetneq X \Rightarrow \neg \llbracket\varphi_1[Y / X] \rrbracket\big) \Big) \Rightarrow \llbracket \varphi_2 \rrbracket
    \end{align*}
    Path formulas and first-order quantification can be translated verbatim. 
    To translate (fixpoint-based) second-order quantification we use additional second order quantification to express the fact that a set should be a least or greatest.
    We write $Y \subsetneq X$ as a shorthand for 
    \begin{align*}
        \big(\forall \pi \in Y. \pi \modin X \big) \land \big(\exists \pi \in Y. \forall \pi' \in Y\ldot \F \neg (\pi =_\ap \pi' ) \big)
    \end{align*}
     $\varphi_1[Y / X]$ denotes the formula where all free occurrences of $X$ are replaced by $Y$.

    Note that above formula is no \sohyperltl{} formula as it is not in prenex normal form. However, no (first or second-order) quantification occurs under temporal operators, so we can easiliy bring it into prenex normal form.
    It is easy to see that any system satisfies $\varphi$ in the \fphyperltl{} semantics iff it satisfies $\llbracket \varphi \rrbracket$ in the \sohyperltl{} semantics. \qed
\end{proof}

\subsection{Additional Proofs for \Cref{sec:sohyperltl_mc}}\label{app:proofs}

\fphyperUndec*
\begin{proof}
Instead of working with Turing machines, we consider two counter machines as they are simpler to handle. 
A two-counter machine (2CM) maintains two counters $c_1, c_2$ and has a finite set of instructions $l_1, \ldots, l_n$.
Each instruction $l_i$ is of one of the following forms, where $x \in \set{1,2}$ and  $1 \leq j,k \leq n$.
\begin{itemize}
\item $l_i : \big[c_x \coloneqq c_x+1; \texttt{ goto } \{l_j, l_{k}\}\big]$ 
\item $l_i : \big[c_x \coloneqq c_x-1; \texttt{ goto } \{l_j, l_{k}\}\big]$ 
\item $l_i : \big[\texttt{if } c_x = 0 \texttt{ then goto } l_j \texttt{ else goto } l_k\big]$
\end{itemize}
Here, \texttt{goto} $\{l_j, l_{k}\}$ indicates that the machine nondeterministically chooses between instructions $l_j$ and $l_k$.
A configuration of a 2CM is a tuple $(l_i, v_1, v_2)$, where $l_i$ is the next instruction to be executed, and $v_1, v_2 \in \nat$ denote the values of the counters.
The initial configuration of a 2CM is $(l_1, 0, 0)$. 
The transition relation between configurations is defined as expected. 
Decrementing a counter that is already $0$ leaves the counter unchanged. 
An infinite computation is \emph{recurring}  if it visits instruction $l_1$ infinitely many times.
Deciding if a machine has a recurring computation is $\Sigma_1^1$-hard \cite{AlurH94}. 

For our proof, we encode each configuration of the 2CM as an infinite trace. 
We use to atomic propositions $c_1, c_2$, which we ensure to hold exactly once and use this unique position to represent the counter value. 
The current instruction can be encoded in the first position using APs $l_1, \ldots, l_n$. 
We further use a fresh AP $\dagger$ -- which also holds exactly once -- to mark the step of this configuration.
We are interested in a set of traces $X$ that satisfies all of the following requirements:
\begin{enumerate}
    \item The set $X$ contains the initial configuration. Note that in this configuration, $\dagger$ holds in the \emph{first} step.
    \item For every configuration, there exists a successor configuration. Note that in the successor configuration, $\dagger$ is shifted by one position. 
    \item\label{item:uni} All pairs of traces where $\dagger$ holds at the same position are equal. $X$ thus assigns a unique configuration to each step.
    \item\label{item:cons} The computation is recurrent. As already done in \cite{BeutnerCFHK22}, we can ensure this by adding a fresh counter that counts down to the next visit of instruction $l_1$.
\end{enumerate}

It is easy to see that we can encode all the above as a \sohyperltl{} (or \fphyperltl{}) formula using only first-order quantification over traces in $X$. 
Let $\varphi$ be such a formula.
It is easy to see that the \sohyperltl{} formula $\exists X. \varphi$ holds on any system, iff there exists a set $X$ with the above properties iff the 2CM has a recurring computation. 
The reproves \Cref{prop:hyperltl_sat_in_sohyperltl_mc}.

For the present proof, we do, however, want to show $\Sigma^1_1$-hardness for the less powerful \fphyperltl{}.
The key point to extend this is to ensure that iff there exists a set $X$ that satisfies the above requirements, then there also exists a minimal one. 
The key observation is that -- by the construction of $X$ -- any set $X$ that satisfies the above \emph{is already minimal}:
The AP $\dagger$ ensures that, for each step, there exists exactly one configuration (\Cref{item:uni}), and, when removing any number of traces from $X$, we will inevitably violate \Cref{item:cons}.

We thus get that the 2CM has a recurring computation iff there exists \emph{mininmal} $X$ that satisfies $\varphi$ iff the \fphyperltl{} formula $\exists (X, \smallestSet, \varphi) \ldot \mathit{true}$ holds on an arbitrary system. \qed
\end{proof}

\sohyperToSat*
\begin{proof}
    Assume that we have $m$ second-order quantifiers, and for each $k \in [m]$, $\pi_1, \ldots, \pi_{l_k}$ are the first-order variables occurring before $X_k$ is quantified:
    \begin{align*}
        \varphi = \quant \pi_1, \ldots, \quant \pi_{l_1} \exists X_1 \quant \pi_{l_1 + 1}, \ldots, \quant \pi_{l_2} \exists X_2 \cdots \exists X_m \quant \pi_{l_m + 1}, \ldots, \quant \pi_{l_{m+1}} \psi\ , 
    \end{align*}
    The second-order variables we use thus are $\sovars = \{\systemvar, \allvar, X_1, \ldots, X_m\}$.
    Each $X \in \sovars$ can depend on some of the traces quantified before it. 
    In particular, each $X_k$ depends on traces $\pi_1, \ldots, \pi_{l_k}$, $\systemvar$ depends on none of the traces (as it is fixed) and neither does $\allvar$.
    We define a function $c : \sovars{} \to \nat$ that denotes on how many traces the set can depend, i.e., $c(\systemvar) = c(\allvar) = 0$, and $c(X_k) = l_k$.

    We then encode this functional dependence into a model by using traces. 
    For each $X \in \sovars$, we define atomic propositions
    \begin{align*}
        \ap_X := \{ [a, j ]_X \mid a\in \ap \land j \in [c(X)] \cup \{\dagger\} \}
    \end{align*}
    and then define 
    \begin{align*}
        \ap' := \ap \uplus \bigcup_{X \in \sovars{}} \ap_X
    \end{align*}

    We will use the original APs to describe traces. 
    The additional propositions are used to encode functions which map the $c(X)$ traces quantified before $X$ to some set of traces.
    Given traces $\pi_1, \ldots, \pi_{C(X)}$, we say a trace $t$ is in the model of $X$ if there exists some trace $\dot{t}$ (in the model of our final formula) such that 
    \begin{align}\label{eq:encoding}
    \begin{split}
        \forall z \in \nat\ldot \Big(\bigwedge_{j \in [C(X)]} \bigwedge_{a \in \ap} \big(a \in t_j(z) \leftrightarrow [a, j]_X \in \dot{t}(z) \big) \Big) \land \\
       \forall z \in \nat\ldot \Big(\bigwedge_{a \in \ap} \big(a \in t(z) \leftrightarrow [a, \dagger]_X \in \dot{t}(z) \big) \Big)
    \end{split}
    \end{align}
    holds.
    That is, the trace $\dot{t}$ defines the functional mapping from $t_1, \ldots, t_{c(X)}$ to $t$.
    
    We use this idea of encoding functions to translate $\varphi$ as follows:
    \begin{align*}
        \llbracket \psi \rrbracket &:= \psi\\
        \llbracket \quant X. \varphi \rrbracket &:= \llbracket \varphi \rrbracket\\
        \llbracket \exists \pi_i \in X\ldot \varphi \rrbracket &:= \exists \pi_i\ldot 
          \langle \pi_i \in X \rangle \land \llbracket \varphi \rrbracket\\
          \llbracket \forall \pi_i \in X\ldot \varphi \rrbracket &:= \forall \pi_i\ldot 
          \langle \pi_i \in X \rangle \rightarrow \llbracket \varphi \rrbracket
    \end{align*}
    We leave path formulas unchanged and fully ignore second-order quantification (which is always existential). 
    We define $\langle \pi \in X \rangle$ as an abbreviation for 
    \begin{align*}
        \exists \dot{\pi}\ldot \Big(\bigwedge_{j \in [c(X)]} \G \bigwedge_{a \in \ap} \big(a_{\pi_j} \leftrightarrow ([a, j]_X)_{\dot{\pi}}\big) \Big) \land \Big(\G \bigwedge_{a \in \ap} \big(a_{\pi} \leftrightarrow ([a, \dagger]_X)_{\dot{\pi}}\big) \Big)
    \end{align*}
    which encodes \Cref{eq:encoding}.

    The last thing we need to ensure is that $\systemvar$ and $\allvar$ are encoded correctly. 
    That is for any trace $t$ we have $t \in \traces(\calT)$ iff there exists a $\dot{t}$ (in the model) such that for any $z \in \nat$ and $a \in \ap$, we have $a \in t(z)$ iff $[a, \dagger]_\systemvar \in \dot{t}(z)$.
    Similarly, there should exists a trace $\dot{t}$ (in the model) such that for any $z \in \nat$ and $a \in \ap$, $[a, \dagger]_\allvar \in \dot{t}(z)$.
    Both requirement can be easily expressed as \hyperqptl{} formulas $\varphi_\systemvar$ and $\varphi_\allvar$ (Note that expressing these requirement in HyperLTL is not possible as we cannot quantify over traces that are not within the current model). 
    
    It is easy to see that $\calT \models \varphi$ iff $\llbracket \varphi \rrbracket \land \varphi_\systemvar \land \varphi_\allvar$ is satisfiable. \qed
\end{proof}

\easyfrags*

\begin{proof}
$\forall^* X \forall^*\pi.\varphi$, $\exists^* X \exists^*\pi.\varphi$:
As universal properties are downwards closed and existential properties are upwards closed, removing second order quantification does not change the semantics of the formula. 

$\exists^* X \forall^*\pi.\varphi$: for every variable $X$ we introduce a trace variable $\pi_x$ which is existentially bounded, and for every occurrence of $\pi$ in $\varphi$ such that $\pi\in X$, we replace $\pi$ with $\pi_x$. 
If $\varphi$ holds for all traces in $X$, it holds also when replacing $X$ with the singleton $\pi_x$. The other direction of implication is trivial as we found a set $X = \{ \pi_X \}$ for which $\varphi$ holds. 
For similar reasons, for $\forall^* X \exists^*\pi.\varphi$ we can remove the second order quantification and replace every existentially trace quantification with a universal trace quantification.

As a conclusion of all of the above, we have that the model-checking of \sohyperltl formulas of the following type is decidable: $\quant_1 X_1 \cdots \quant_k X_k. \quant'_1 \pi_1 \in X_1 \cdots \quant'_k \pi_k \in X_k. \varphi$ where  we have $\quant_i, \quant'_i\in \{\exists, \forall \}$ and in $\varphi$ only traces from the same set $X_i$ are compared to each other (that is, $\varphi$ does not bind traces from different sets to each other).  

Lastly, for $\psi = \exists X.\exists^* \pi\in X.\forall^*\pi'\in X. \varphi$, we use a reduction to the satisfiability problem of HyperQPTL~\cite{DBLP:conf/lics/CoenenFHH19}. Let $\varphi_{\aut{T}}$ a QPTL formula that models the system. Then, $\aut{T} \models \psi$ iff the HyperQPTL formula $\hat{\psi}$ is satisfiable, where $\hat{\psi} = \exists^* \pi. \forall^*\pi'.\forall\tau.\varphi_\aut{T}(\tau)\wedge\psi (\pi,\pi') $. Since $\hat{\psi}$ is a $\exists^*\forall^*$ HyperQPTL formula, its satisfiability problem is decidable~\cite{DBLP:conf/lics/CoenenFHH19}. \qed
\end{proof}

\section{Appendix for~\Cref{sec:examples}}\label{app:kltl}

\subsection{Additional Material for \Cref{sec:ltlK}} \label{app:ltlk}

LTL$_{\calK, \sfC}$ is defined by the following grammar:
\begin{align*}
    \psi := a \mid \neg \psi \mid \X \psi \mid \psi_1 \U \psi_2 \mid \calK_A \psi \mid \mathsf{E} \psi \mid \sfC \psi
\end{align*}
where $A$ is a set of agents.
Given two traces $t, t'$ we write $t[0,i] =_{A_i} t'[0, i]$ if $t$ and $t'$ appear indistinguishable for agent $A_i$ in the first $i$ steps.
Given a set of traces $T$ and a trace $t$ we define 
\begin{align*}
	t, i &\models_T  a &\text{iff} \quad  &a \in t(i)\\
	t, i &\models_T  \neg \psi &\text{iff} \quad & t, i \not\models_T  \psi \\
	t, i &\models_T  \psi_1 \land \psi_2 &\text{iff} \quad  &t, i \models_T \psi_1 \text{ and }  t, t \models_T  \psi_2\\
	t, i &\models_T  \X  \psi &\text{iff} \quad & t, i+1 \models_T \psi \\
	t, i &\models_T  \psi_1 \U \psi_2 &\text{iff} \quad & \exists j \geq i \ldot t, j \models_T\psi_2 \text{ and } \forall i \leq k < j \ldot  t, k \models_T  \psi_1\\
    t, i &\models_T  \calK_{A_i} \psi &\text{iff} \quad & \forall t' \in T\ldot t[0, i] =_{A_i} t'[0, i] \to t', i \models \psi\\
    t, i &\models_T \mathsf{E} \psi & \text{iff} \quad & t, i \models_T \bigwedge_{A_i \in A} \calK_{A_i} \psi\\
    t, i &\models_T  \sfC \psi &\text{iff} \quad & t,i \models_T \mathsf{E}^\infty \psi
\end{align*}
The \emph{everyone knows} operator $\mathsf{E}$ states that every agents knows that $\psi$ holds.
The semantics of the common knowledge operator $\sfC$ is then the infinite chain, or transitive closure, of \emph{everyone knows that everyone knows that ...} $\psi$.

\kltl*
\begin{proof}
    Let $\{ A_1, \ldots, A_n\}$ be the set of agents. For the $j$th occurrence of
    a knowledge operator $\know\in\{ \calK, \sfC \}$
    we introduce a new trace variable $\tau_j$ and a second order variable $Y_j$.  
    In addition, we introduce a new atomic proposition $k$.
    We then replace the $j$th occurrence of $\know$ in $\varphi$ with $k_{\tau_j}$, 
    resulting in the HyperLTL formula $\psi^\tau$.
    Denote by $\psi_{j}$ the subformula of $\psi^\tau$ that directly follows~$k_{\tau_{j}}$. 
     
    We define by induction on the nested knowledge operators the corresponding \fphyperltl{} formula $\varphi_j$. 
    For the first (inner-most) operator, we define $\varphi_0$ to be the LTL formula nested under this operator. Now, assume we have defined $\varphi_{j-1}$, and let $\know\in \{\calK, \sfC \}$ be the $j$th inner-most knowledge operator. Then, $\varphi_{j}$ is defined as follows. 

    \begin{align*}
        \forall \tau_j\in\allvar. \Big(Y_j, \smallestSet, (\pi\in Y_j \wedge \forall\pi_1\in Y_j .\forall\pi_2\in \systemvar. \left(\neg k_{\tau_j}\LTLuntil(k_{\tau_j}\wedge \LTLnext \LTLglobally \neg k_{\tau_j})\right) \rightarrow
        \\ 
       \left( \mathit{equiv}^j_{\know}(\pi_1, \pi_2) \LTLuntil k_{\tau_j}  \rightarrow
       \pi_2\modin Y_j \right)\Big) . 
       \forall \pi_1\in Y_j.\varphi_{j-1} \wedge \LTLglobally \left(k_{\tau_j} \rightarrow \psi_j \right)
    \end{align*}
    Where
    \begin{align*}
        \mathit{equiv}^j_{\calK_{A_i}} :=  \pi_1 \leftrightarrow_{A_i} \pi_{2}  
          \quad\quad\quad \mathit{equiv}^j_{\sfC}: =   \bigvee_{i\in[n]} \pi_1 \leftrightarrow_{A_i} \pi_{2}   
    \end{align*}

    Each instantiation of the universally quantified variable $\tau_j$ corresponds to one timepoint in which we want to check knowledge on. Therefore, we verify that $k$ appears exactly once on the trace (first line of the formula). Then, we add to the knowledge set all traces that are equivalent (by the knowledge if this agent, or by the common knowledge of all agents) until this timepoint. 
    The formula outside the minimality condition verifies that for all traces in the set $Y_j$, the subformula $\varphi_{j-1}$ holds thus enforcing the knowledge requirement on all traces in $Y_j$. In addition, it uses the property $\LTLglobally \left(k_{\tau_j} \rightarrow \psi_j \right)$ to make sure that the temporal (non-knowledge) requirements hold at the same time for all traces in $Y_j$. 
 Finally, we define $\varphi' = \forall\pi.\varphi_n$ where $n$ is the number of knowledge operators in~$\varphi$. Note that in general, the formula above can yield infinitely many sets of traces. In practical examples, e.g. the examples appear in this paper, we can write simplified formulas that reason about the specific problem at hand and only require a finite (usually $1$) number of such sets. Also note that as the sets $Y_j$ are unique, we do not need the quantification over least sets. 
    \qed
\end{proof}

\subsection{Additional Material To \Cref{sec:asynchronous:hyperproperties}}\label{app:asynchronous:hyperproperties}

Consider the system in \Cref{fig:syn_program} (taken from \cite{BaumeisterCBFS21}).
The synchronous version of observational determinism ($\varphi_\mathit{OD}$) holds on this system:
While we branch on the secret input $h$, the value of $o$ is the same across all traces.
In contrast,  $\varphi_\mathit{OD}$ does not hold on the system in \Cref{fig:asyn_program} as, in the second branch, the update occurs one step later. 
This, however, is not an accurate interpretation of $\varphi_\mathit{OD}$ (assuming that an attacker only has access to the memory footprint and not the CPU registers or a timing channel), as any two traces are \emph{stutter} equivalent (with respect to $o$).
In AHLTL we can express an asynchronous version of OD as $\forall \pi_1. \forall \pi_2. \mathbf{E} \G(o_{\pi_1} \leftrightarrow o_{\pi_2})$ stating that all two traces can be aligned such they (globally) agree on $o$.
This formula now holds on both \Cref{fig:syn_program} and \Cref{fig:asyn_program}.

\begin{figure}[!t]
\vspace{-5mm}
\begin{subfigure}{0.5\textwidth}
\begin{exampleCode}
$l$ <- $0$
if $h$ then 
@@$o$ <- 1
else 
@@$o$ <- $o$ + 1
\end{exampleCode}
\subcaption{}\label{fig:syn_program}
\end{subfigure}%
\begin{subfigure}{0.5\textwidth}
\begin{exampleCode}
$o$ <- $0$
if $h$ then 
@@$o$ <- 1
else 
@@$reg$ <- $o$ + 1
@@$o$ <- $reg$
\end{exampleCode}
\subcaption{}\label{fig:asyn_program}
\end{subfigure}

\caption{Example Programs.}
\vspace{-5mm}
\end{figure}

\section{Additional Material for \Cref{sec:implementation}}\label{app:implementation}

\subsection{Muddy Children}

We consider the following \fphyperltl{} formula which captures the common knowledge set after $m$ steps. 
\begin{align*}
    &\forall \pi \in \systemvar{}. \\
    &\quad\Big(X, \smallestSet,\pi \modin X \land \forall \pi_1 \in X. \forall \pi_2 \in \systemvar{}. \big(\bigvee_{i \in [n]} \G^{\leq m} \pi_1 =_{\ap_i} \pi_2\big) \Rightarrow \pi_2 \modin X \Big). \\
    &\quad\quad\quad\quad\forall \pi_1 \in X. \forall \pi_2 \in X. \bigwedge_{i \in [n]} {i_c}_{\pi_1} \leftrightarrow {i_c}_{\pi_2}
\end{align*}
where we write $\G^{\leq m} \psi$ to assert that $\psi$ holds in the first $m$ steps. 
Here, $\ap_i$ are all propositions observable by child $i$, i.e., all variables expect the one that determines if $i$ is muddy.
Note that this formula falls within our fixpoint fragment of \fphyperltl{}.

Further note that we express a hyperproperty on the knowledge set, i.e., compare pairs of traces in the knowledge set. 
This is not possible in logic's such as LTL$_{\calK, \sfC}$ in which we can only check if a trace property holds on the knowledge set.

\subsection{Asynchronous Hyperproperties}

We verify 
\begin{align}\label{eq:od_synch}
    \varphi_\mathit{OD} := \forall \pi_1. \forall \pi_2. \G (o_{\pi_1} \leftrightarrow o_{\pi_2})
\end{align}
and the asynchronous version of it. 
In AHLTL \cite{BaumeisterCBFS21} we can define this as follows:
\begin{align*}
    \varphi_\mathit{OD} := \forall \pi_1. \forall \pi_2. \mathbf{E} \G (o_{\pi_1} \leftrightarrow o_{\pi_2})
\end{align*}
In \fphyperltl{} we can express the above AHLTL formula as the following formula:
\begin{align}\label{eq:od_asynch}
\begin{split}
    &\forall \pi_1 \in \systemvar. \forall \pi_2 \in \systemvar. \\
    &\quad\big(X_1, \smallestSet, \pi_1 \modin X_1 \land \forall \pi \in X. \forall \pi \in \allvar. ((o_{\pi} \leftrightarrow o_{\pi'}) \U \G (o_{\pi} \leftrightarrow \X o_{\pi'}) \rightarrow \pi' \modin X_1 \big)\\
    &\quad\big(X_2, \smallestSet, \pi_2 \modin X_2 \land \forall \pi \in X. \forall \pi' \in \allvar. ((o_{\pi} \leftrightarrow o_{\pi'}) \U \G (o_{\pi} \leftrightarrow \X o_{\pi'}) \rightarrow \pi' \modin X_2 \big)\\
    &\quad\quad\exists \pi_1 \in X_1, \exists \pi_2 \in X_2\ldot \G (o_{\pi_1} \leftrightarrow o_{\pi_2})
\end{split}
\end{align}
Note that this formula falls within our fixpoint fragment of \fphyperltl{}.

In \Cref{tab:more}, we check \Cref{eq:od_synch} and \Cref{eq:od_asynch} on the two example programs from the introduction in \cite{BaumeisterCBFS21} and the asynchronous program \texttt{Q1} from \cite{BeutnerF21,BeutnerF23LMCS}.

\subsection{Mazurkiewicz Trace}

Using our logic, we can also express many properties that reason about the class of (Mazurkiewicz) traces. 
The idea of  trace is to abstract away from the concrete order of independent actions (letters). 
Let $I \subseteq \Sigma \times \Sigma$ be an independence relation on letters. That is, $(a, b) \in I$ iff interchanging the order of $a$ and $b$ has no effect (e.g., local actions for two concurrent processes). 
We say two traces $t_1, t_2$ are equivalent (written $t_1 \equiv_I t_2)$ if we can rewrite $t_1$ into $t_2$ by flipping consecutive letters that are in $I$.
For example if $(a, b) \in I$ then $xaby \equiv_I xbay$ for all $x \in \Sigma^*, y \in \Sigma^\omega$.
The Mazurkiewicz trace of a concrete trace $t$ is then defined as $[t]_I := \{t' \mid t \equiv_i t\}$. 

Using \fphyperltl{} we can directly reason about the equivalence classes of~$\equiv_I$.
Consider the following (quantifier-free) formula $\varphi_I(\pi, \pi')$, stating that $\pi$ and $\pi'$ are identical up to one flip of consecutive independent actions. 
\begin{align*}
   (\pi =_\ap \pi') \W \Big(\bigvee_{(x, y) \in I} x_\pi \land y_{\pi'} \land \X (y_\pi \land x_{\pi'}) \land \X\X \G (\pi =_\ap \pi') \Big)
\end{align*}
Here we write $x_\pi$ for $x \in \Sigma = 2^\ap$ for the formula $\bigwedge_{a \in X} a_\pi \land \bigwedge_{a \not\in X} \neg a_\pi$.
The formula asserts that both traces are equal, until one pair of independent actions is flipped, followed by, again, identical suffixes. 

Using $\varphi_I$ we can directly reason about Mazurkiewicz traces.
Assume we are interested if for every (concrete) trace $t$ that satisfies LTL property $\phi$, all its equivalent traces satisfy $\psi$.
We can express this in \fphyperltl{} as follows:
\begin{align}\label{eq:trace}
\begin{split}
     &\forall \pi \in \systemvar{}. (X, \smallestSet, \pi \modin X \land \forall {\pi_1} \in X\ldot \forall {\pi_2} \in \allvar\ldot \varphi_I(\pi_1, \pi_2) \rightarrow \pi_2 \modin X) \ldot \\
    &\quad\quad\forall \pi' \in X\ldot \phi(\pi) \rightarrow \psi(\pi')
\end{split}
\end{align}
That is, for all traces $\pi$, we compute the set $X$ which contains all equivalent traces. 
This set should contain $\pi$, must be closed under $\varphi_I$, and is minimal w.r.t.~those two properties. 
Note that this formula falls within our fixpoint fragment of \fphyperltl{}.

\subsubsection*{Preliminary Experiments.}
The above formula is applicable to arbitrary independec relation, so our tool can be used to automatically check arbitrary properties on Mazurkiewicz traces.
In our preliminary experiments, we focus on very simple Mazurkiewicz trace.
We model a situation where we start with $\{a\}^1\{\}^\omega$ and can swap letters $\{a\}$ and $\{\}$.
We acknowledge that the example is very simple, but nevertheless emphasize the complex trace-based reasoning possible with \texttt{HySO}.
We then ask the following problems: 
We ask if, from every trace, $a$ holds at most once on all traces in $X$. 
If we apply only iteration, \texttt{HySO} will move the unique $a$ one step to the right in each iteration, i.e., after $n$ steps the current under-approximation contains traces $\{a\}\{\}^\omega$, $\{\}^1\{a\}\{\}^\omega$, $\{\}^2\{a\}\{\}^\omega$, $\ldots$, $\{\}^n\{a\}\{\}^\omega$. 
This fixpoint will never converge, so \texttt{HySO} would iterate forever. 
Instead, if we also enable learning of overapproximations,  \texttt{HySO} automatically learns an invariant that proves the above property (which is instance \textsc{SwapA} in \Cref{tab:mar}). 
We can relax this requirement to only consider a fixed, finite prefix. 
\textsc{SwapA}$_n$ states that $a$ \emph{can} hold in any position in the first $n$ steps (by using existential quantification over $X$). 
As expected, \texttt{HySO}  can prove this property by iteration $n$ times.
Lastly, the violation \textsc{SwapAViolation}$_n$ states that any trace satisfies $a$ within the first $n$ steps. 
This obviously does not hold, but \texttt{HySO} requires $n$ iterations to find a counterexample, i.e., a trace where $a$ does not hold within the first $n$ steps.

\end{document}